\newtheorem{theorem}{Theorem}
\newtheorem{definition}{Definition}
\newtheorem{lemma}{Lemma}
\newenvironment{proof}[1][Proof]{\noindent\textbf{#1.} }{\ \rule{0.5em}{0.5em}}
\newcommand{\mcalg}{\ensuremath{\mathcal{G}}\xspace}
\newcommand{\mcal}{\ensuremath{\mathcal{M}}\xspace}
\newcommand{\mrgma}{\ensuremath{\mathcal{M}_{RG}}\xspace}
\newcommand{\map}{\ensuremath{\mathcal{M}^+}\xspace}
\newcommand{\mbad}{\ensuremath{\mathcal{M}^b}\xspace}
\newcommand{\man}{\ensuremath{\mathcal{M}^-}\xspace}
\newcommand{\mann}{\ensuremath{\mathcal{M}^{--}}\xspace}
\newcommand{\greedy}{\textsc{GreedyMatching}\xspace}
\newcommand{\greedyv}{\textsc{GreedyVertex}\xspace}
\newcommand{\greedye}{\textsc{GreedyEdge}\xspace}
\newcommand{\opt}{\textsc{Opt}\xspace}
\newcommand{\av}{\textsc{Av}\xspace}
\newcommand{\rgma}{\textsc{Rgma}\xspace}
\newcommand{\mrg}{\textsc{Mrg}\xspace}
\newcommand{\eps}{\epsilon\xspace}
\tikzstyle{edge} = [draw,thick,-]
\tikzstyle{weight} = [font=\small]
\tikzstyle{matched} = [draw,line width=5pt,-,black!80]
\begin{document}

\title{\vspace{-0.5cm}The Complexity of Weighted Greedy Matching}

\author{Argyrios Deligkas\thanks{%
Department of Computer Science, University of Liverpool, UK. Email: \texttt{a.deligkas@liverpool.ac.uk}} 
\and George B.~Mertzios\thanks{%
School of Engineering and Computing Sciences, Durham University, UK. Email: \texttt{george.mertzios@durham.ac.uk}} 
\and Paul G. Spirakis\thanks{%
Department of Computer Science, University of Liverpool, UK, and Computer Technology Institute (CTI), Greece. Email: \texttt{p.spirakis@liverpool.ac.uk}}}
\date{\vspace{-1.0cm}}

\maketitle

\begin{abstract}
Motivated by the fact that in several cases a matching in a graph is stable if 
and only if it is produced by a greedy algorithm, we study the problem of 
computing a \emph{maximum weight greedy matching} on weighted graphs, termed \greedy. 
In wide contrast to the maximum weight matching problem, for which many efficient
algorithms are known, we prove that \textsc{GreedyMatching} is \emph{strongly NP-hard} and \emph{APX-complete}, and thus it does not admit a PTAS unless P=NP, 
even on graphs with maximum degree at most~3 and with at most three different integer edge weights. 
Furthermore we prove that \greedy is \emph{strongly NP-hard} if the input graph is in addition \emph{bipartite}.
Moreover we consider two natural parameters of the problem, for which we establish a \emph{sharp threshold} behavior between NP-hardness and tractability. 
On the positive side, we present a randomized approximation algorithm (\rgma) for \greedy on a special class of weighted graphs, called \emph{bush graphs}. 
We highlight an unexpected connection between \rgma and the approximation of 
maximum cardinality matching in unweighted graphs via randomized greedy algorithms.
We show that, if the approximation ratio of \textsc{Rgma} is $\rho$, 
then for every $\epsilon >0$ the randomized \textsc{Mrg} algorithm of~\cite{ADFS95} gives a $(\rho-\epsilon)$-approximation for the maximum cardinality matching. 
We conjecture that a tight bound for $\rho $ is $\frac{2}{3}$; we prove our conjecture true for two subclasses of bush graphs. 
Proving a tight bound for the approximation ratio of \textsc{Mrg} on unweighted graphs (and thus also proving a tight value for $\rho $) is a
long-standing open problem~\cite{PS12}. This unexpected relation
of our \textsc{Rgma} algorithm with the \textsc{Mrg} algorithm may provide
new insights for solving this problem.\newline

\noindent \textbf{Keywords:} Greedy weighted matching, maximum cardinality matching, NP-hard, approximation, randomized algorithm.
\end{abstract}

\section{Introduction}
The matching problem is one of the fundamental and most well studied problems in 
combinatorial optimization. Several different versions of the matching problem 
have been studied over the years: matchings on weighted or unweighted 
bipartite~\cite{FL10} and general~\cite{Edmonds} graphs, popular 
matchings~\cite{AIKM}, stable matchings~\cite{R84}, and greedy 
matchings~\cite{HK78}, to name but a few. 
In this article we investigate the problem of computing and approximating a 
maximum weight greedy matching on edge-weighted graphs; i.e.~the matching with 
the maximum weight when every edge that is added to the matching is chosen 
greedily from the set of available edges with the largest weight.

Although various polynomial time algorithms are known for the maximum weight 
matching problem, these algorithms are not always very fast in practice, 
see~\cite{Duan14, San09} and references therein for a comprehensive list of 
known results. 
One way to deal with this inefficiency is to turn into approximation algorithms;
a recent major result in this direction was given by Duan and Pettie~\cite{Duan14} 
who provided an algorithm for computing an $(1-\eps)$-approximation of the maximum 
weight matching in $O(m \eps^{-1} \log\eps^{-1})$ time, where $m$ is the number 
of edges. 
However, there are cases where it makes sense to search for algorithms that are 
fast in practice and easy to implement, such as \emph{greedy} algorithms. To make best 
use of such algorithms, their algorithmic performance needs to be properly understood.

There are cases where a greedy approach for computing a weighted matching is 
preferred or even \emph{required}, as the classical notion of a maximum weight 
matching does not necessarily fit the underlying problem. 
Consider for example the case where the vertices of a graph represent players 
and the edge weights represent the ``happiness'' that the corresponding 
players get from this match. 
It is possible that two players, which are not matched to each other in a 
maximum weighted matching, can still coordinate and match together instead of 
staying in the current matching, thus becoming both individually ``happier''. 
This is the so-called \emph{stable matching problem} which has received a 
lot of attention the previous years due to the plethora of its applications 
in real life problems, including kidney exchange~\cite{R04} and matching medical 
students to hospitals~\cite{RP99}. 
In many applications of the stable matching problem, such as in kidney exchange, 
there are only a few feasible values of ``happiness''. Thus in the underlying 
graph there are only few discrete edge weights, while many edges share the same 
weight. 
In such cases it is not clear how to compute a stable matching, as ties among 
edges with the same weight must be resolved.

It turns out that in many cases a matching is stable in the Nash equilibrium 
sense if and only if it can produced by a greedy algorithm~\cite{ADN13}. 
In the graph-matching game described above, a Nash equilibrium is a matching \mcal in which no two vertices $u,v$ can become individually happier 
by replacing their currently matched edges in \mcal with the edge $(u,v)$.
Thus, a \emph{maximum weight greedy} matching is an equilibrium (i.e.~stable matching) 
that maximizes the \emph{social welfare}, that is, the cumulative ``happiness'' of all the players.
A natural algorithmic question is whether a maximum weight greedy matching can be efficiently computed. 
Although greedy algorithms for matching problems have been studied extensively 
in the past~\cite{PS12, ADFS95, HK78, DF91, MP97, DFP93}, to the best of our knowledge
not much is known about the problem of computing a maximum weight greedy matching.

\subsection{Related work}
The scenarios of matching problems where the vertices of the graph correspond to 
players can vary from matching employees and employers~\cite{J79}, to matching
kidney donors and recipients~\cite{R04, ABS07}. Anshelevich, Das, and 
Naamad~\cite{ADN13} and Anshelevich, Bhardwaj, and Hoefer~\cite{ABH13} 
studied the price of anarchy and stability of stable matchings on weighted   
graphs. Furthermore, the authors in~\cite{ADN13} provided algorithms that 
compute almost stable matchings. Our work is closely related to~\cite{ADN13}, 
although their techniques cannot be applied to our problem since we study 
\emph{only} matchings that are greedy, whereas almost stable matchings are not.

Greedy matchings have been studied extensively over the years. 
The classical result 
by Korte and Hausmann~\cite{HK78} states that an arbitrary greedy matching is a 
$\frac{1}{2}$-approximation of the maximum cardinality matching, i.e.~every 
greedy matching on unweighted graphs picks at least half of maximum number of 
edges that any matching can pick. For edge-weighted graphs,
Avis~\cite{Avis83} showed that every algorithm that greedily picks edges with 
the maximum currently available weight is a $\frac{1}{2}$-approximation of the maximum 
weight matching. Hence, every weighted greedy matching is also a 
$\frac{1}{2}$-approximation for the maximum weight greedy matching problem. 
Several authors studied randomized greedy algorithms for the maximum cardinality
matching problem. The currently best randomized algorithm, known as \mrg~\cite{ADFS95}, picks the next edge to add to the matching 
by first selecting a random unmatched vertex $V$ of the graph and then a random 
unmatched neighbor of $v$. 
Aronson, Dyer, Frieze and Suen~\cite{ADFS95} showed that \mrg breaks the $\frac{1}{2}$-barrier 
and that it achieves a $\frac{1}{2}+ 1/400,000$-approximation guarantee on every graph. 
Recently, Poloczek and Szegedy~\cite{PS12} provided a different analysis for \mrg
and shown that it achieves an approximation guarantee of at least $\frac{1}{2}+
\frac{1}{256}$. However, as experiments suggest, the approximation guarantee of
\mrg can be as large as $\frac{2}{3}$~\cite{PS12}.

\subsection{Our contribution}
In this paper we study the computational complexity of computing and
approximating a \emph{maximum weight greedy matching} in a given
edge-weighted graph, i.e.~a greedy matching with the greatest weight among
all greedy matchings. This problem is termed \textsc{GreedyMatching}. In
wide contrast to the maximum weight matching, for which many efficient
algorithms are known (see \cite{Duan14} and the references therein), we prove
that \textsc{GreedyMatching} is \emph{strongly NP-hard} by a reduction from
a special case of MAX2SAT. Our reduction also implies hardness of
approximation; we prove that \textsc{GreedyMatching} is \emph{APX-complete},
and thus it does not admit a PTAS unless P=NP. These hardness results hold
even for input graphs with maximum degree at most 3 and with at most three
different integer edge weights, namely with weights in the set $\{1,2,4\}$.
Furthermore, by using a technique of Papadimitriou and Yannakakis~\cite{PY91}%
, we extend the NP-hardness proof to the interesting case where the input
graph is in addition \emph{bipartite}. Next, we study the decision
variations \textsc{GreedyVertex} and \textsc{GreedyEdge} of the problem,
where we now ask whether there exists a greedy matching in which a specific
vertex $u$ or a specific edge $(u,v)$ is matched. These are both natural
questions, as the designer of the stable matching might want to ensure
that a specific player or a specific pair of players is matched in the
solution. We prove that both \textsc{GreedyVertex} and \textsc{GreedyEdge}
are also strongly \textsc{NP}-hard.

As \textsc{GreedyMatching} turns out to be computationally hard, it makes
sense to investigate how the complexity is affected by appropriately
restricting the input. In this line of research we consider two natural
parameters of the problem, for which we establish a \emph{sharp threshold} behavior. As the first parameter we consider the \emph{minimum
ratio }$\lambda _{0}$ of any two \emph{consecutive weights}. Assume that the
graph has $\ell $ different edge weights $w_{1}>w_{2}>\ldots >w_{\ell }$; we
define for every $i\in \lbrack \ell -1]$ the ratio $\lambda _{i}=\frac{w_{i}%
}{w_{i+1}}$ and the minimum ratio $\lambda _{0}=\min_{i\in \lbrack \ell
-1]}\lambda _{i}$. We prove that, if $\lambda _{0}\geq 2$ then \textsc{%
GreedyMatching} can be solved in polynomial time, while for any
constant $\lambda _{0}<2$ \textsc{GreedyMatching} is strongly NP-hard and
APX-complete, even on graphs with maximum degree at most 3 and with at most
three different edge weights.

As the second parameter we consider the \emph{maximum edge cardinality} $\mu 
$ of the \emph{connected components} of $G(w_{i})$, among all different
weights $w_{i}$, where $G(w_{i})$ is the subgraph of $G$ spanned by the edges
of weight $w_{i}$. Although at first sight this parameter may seem
unnatural, it resembles the number of times that the greedy algorithm has to
break ties. At the stage where we have to choose among all available edges
of weight $w_{i}$, it suffices to consider each connected component of the
available edges of $G(w_{i})$ separately from the other components. In
particular, although the weight of the final greedy matching may highly
depend on the order of the chosen edges within a connected component, it is
independent of the ordering that the various different connected components
are processed. Thus $\mu $ is a reasonable parameter for \textsc{%
GreedyMatching}. In the case $\mu =1$ there exists a unique greedy matching
for $G$ which can be clearly computed in polynomial time. We prove that 
\textsc{GreedyMatching} is strongly NP-hard and APX-complete for $\mu \geq 2$%
, even on graphs with maximum degree at most 3 and with at most five
different edge weights.

On the positive side, we consider a special class of weighted graphs, called \emph{bush graphs}, where all edges of the same weight in $G$ form a star (bush).
We present a randomized approximation algorithm (\rgma) for \greedy on bush graphs and 
we highlight an unexpected connection between \rgma and the randomized \mrg algorithm for greedily approximating the maximum cardinality matching on unweighted graphs. 
In particular we show that, if the approximation ratio of \textsc{Rgma} for \textsc{GreedyMatching} on bush graphs is $\rho$, 
then for every $\epsilon >0$ \textsc{Mrg}~\cite{ADFS95} is a $(\rho-\epsilon )$-approximation algorithm for the maximum cardinality matching. 
We conjecture that a tight bound for $\rho $ is $\frac{2}{3}$; among our results we prove our conjecture true for two subclasses of bush graphs. 
Proving a tight bound for the approximation ratio of \textsc{Mrg} on unweighted graphs (and thus also proving a tight value for $\rho $) is a
long-standing open problem~\cite{PS12,ADFS95,DF91}. This unexpected relation
of our \textsc{Rgma} algorithm with the \textsc{Mrg} algorithm may provide
new insights for solving this problem.

\section{Preliminaries}
\label{sec:pre}
Every graph considered in this paper is undirected. For any graph 
$G = (V, E)$ we use $G + u$ to denote the graph $G' = (V', E')$ where 
$V' = V \cup \{u\}$ and $E'$ is consisted by the set $E$ and all the edges the 
vertex $u$ belongs to. Similarly $G - V'$ denotes the induced graph of $G$ 
defined by $V \setminus V'$, where $V'\subseteq V$. We study graphs $G = (V, E)$
with positive edge weights, i.e.~each edge $e=(u,v) \in E$ has a weight $w(e) = w(u,v) > 0$. 
The \emph{degree} of a vertex $u$ is the number of its adjacent vertices in $G$. 
We use $G(w_i)$ to denote the subgraph of $G$ spanned by the edges of weight $w_i$. 
A \emph{matching} $\mcal \subseteq E$ is a set of edges such that no pair of 
them are adjacent.  The weight of a matching \mcal is the sum of the weights of 
the edges in \mcal, formally $w(\mcal) = \sum_{e \in \mcal}w(e)$. A \emph{greedy 
matching} is a maximal matching constructed by the Greedy Matching Procedure.

\begin{tcolorbox}[title=Greedy Matching Procedure]
\textbf{Input:} Graph $G = (V, E)$, with $w_1 > w_2 > \ldots > w_{\ell}$ 
edge weight values\\
\textbf{Output:} Greedy matching \mcal
\begin{enumerate}
\item $\mcal \leftarrow \emptyset$
\item \textbf{for} $i = 1 \ldots \ell$ \textbf{do}
\item \hspace*{2mm} \textbf{while} there is an $e \in E$ such that $w(e) = w_i$ 
\textbf{do}
\item \hspace*{5mm} Pick an edge $e^* \in E$ with $w(e^*) = w_i$ and add it to \mcal;
\label{step4}
\item \hspace*{5mm} Remove all edges adjacent to $e^*$ from $E$;
\end{enumerate}
\end{tcolorbox}

\medskip

Notice that in Step~\ref{step4} 
the edge that is added to the matching \mcal is not specified explicitly.
The rule that specifies which edge is chosen in Step~\ref{step4} can be 
deterministic or randomized, resulting to a specific \emph{greedy matching algorithm}. 
We will use $\opt(G)$ to denote the optimum of \greedy with input $G$, i.e.~the weight of the \emph{maximum greedy matching} of $G$. 

\medskip

\begin{tcolorbox}[title=\greedy]
\textsc{Instance}: Graph $G=(V,E)$ with positive edge weights.\\
\textsc{Task}: Compute a maximum weight greedy matching \mcal for $G$.
\end{tcolorbox}

\medskip

Furthermore, we study another two related problems, where we ask whether there 
is a greedy matching that matches a specific vertex or a specific edge.

\medskip

\begin{tcolorbox}[title=\greedyv]
\textsc{Instance}: Graph $G=(V,E)$ with positive edge weights and a vertex ${v\in V}$.\\
\textsc{Question}: Is there a greedy matching \mcal such that $(v,u) \in \mcal$,
for some ${u \in V}$?
\end{tcolorbox}

\medskip

\begin{tcolorbox}[title=\greedye]
\textsc{Instance}: Graph $G=(V,E)$ with positive edge weights and an edge ${(u,v) \in E}$.\\
\textsc{Question}: Is there a greedy matching \mcal such that ${(v,u) \in \mcal}$?
\end{tcolorbox}

\section{Computational hardness of \greedy\label{sec:hardness}}
 In this section we study the complexity of computing a
maximum weight greedy matching. In Section~\ref{APX-subsec} we prove that 
\greedy is strongly NP-hard and
APX-complete, even on graphs with maximum degree at most 3 and with at most
three different integer weight values. By slightly modifying our reduction
of Section~\ref{APX-subsec}, we first prove in Section~\ref%
{hardness-bipartite-subsec} that \greedy remains
strongly NP-hard also when the graph is in addition bipartite, and we then
prove in Section~\ref{hardness-additional-subsec} that also the two decision problem
variations \greedyv and \greedye are
also strongly NP-hard. Our hardness reductions are from the MAX2SAT(3)
problem~\cite{Ausiello1999,RRR98}, which is the special case of MAX-SAT where in
the input CNF formula $\phi $ every clause has at most 2 literals and every
variable appears in at most 3 clauses; we call such a formula $\phi $ a
2SAT(3) formula.

Note that the decision version of \greedy, where we ask whether there exists a 
greedy matching with weight at least $B$, belongs to the class NP. Indeed we are
able to verify in polynomial time whether a given matching $\mcal$ is maximal, 
greedy and has weight at least $B$. The maximality and the weight of 
the matching $\mcal$ can be computed and checked in linear time. 
To check whether $\mcal$ is greedy, we first check whether the largest edge 
weight in $\mcal$ equals the largest edge weight in $G$. In this case we remove 
from $G$ all vertices incident to the highest weight edges of $\mcal$ and we 
apply recursively the same process in the resulting induced subgraph. Then $\mcal$ is greedy if and only if we end up with a graph with no 
edges.

\subsection{Overview of the reduction.\label{overview-subsec}}
Given a 2SAT(3) formula $\phi $ with $m$ clauses and $n$ variables $%
x_{1},\ldots ,x_{n}$ we construct an undirected graph $G$ with $10n+m$
vertices and $9n+2m$ edges. Then we prove that there exists a truth
assignment that satisfies at least $k$ clauses of $\phi $ if and only if
there exists a greedy matching \mcal in $G$ with weight at
least $14n+k$. Without loss of generality we make the following assumptions
on $\phi $. Firstly, if a variable occurs only with positive (resp.~only
with negative) literals, then we trivially set it true (resp.~false) and
remove the associated clauses. Furthermore, without loss of generality, if a
variable $x_{i}$ appears three times in $\phi $, we assume that it appears
once as a positive literal $x_{i}$ and two times as a negative literal $%
\overline{x_{i}}$; otherwise we rename the negation with a new variable.
Similarly, if $x_{i}$ appears two times in $\phi $, then it appears once as
a positive literal $x_{i}$ and once as a negative literal $\overline{x_{i}}$.

For each variable $x_{i}$ we create a subgraph $\mcalg_{x_{i}}$
and for each clause $C_{j}$ we create one vertex $v_{j}$. The vertices
created from the clauses will be called $v$-vertices. Each subgraph $%
\mcalg_{x_{i}}$ is a path with 10 vertices, where three of them
are distinguished; the vertices $\alpha _{x_{i}},\beta _{x_{i}}$ and $\gamma
_{x_{i}}$. Each distinguished vertex can be connected with at most one $v$%
-vertex that represents a clause. Furthermore, every $v$-vertex is connected
with at most two vertices from the subgraphs $\mcalg_{x_{i}}$;
one distinguished vertex from each of the subgraphs $\mcalg%
_{x_{i}}$ that correspond to the variables of the clause. The edge weights
in the subgraphs $\mcalg_{x_{i}}$ are not smaller than the
weights of the edges connecting the $v$-vertices with the distinguished
vertices of the subgraphs $\mcalg_{x_{i}}$.

\subsection{The construction\label{construction-subsec}} 
The gadget $\mcalg_{x_i}$ that we create for variable $x_i$ is illustrated in 
Figure~\ref{fig:gx}; the distinguished vertices of $\mcalg_{x_i}$ are 
$\alpha_{x_i}$, $\beta_{x_i}$ and $\gamma_{x_i}$. 
The vertex $\alpha_{x_i}$ corresponds to the \emph{positive} literal of the 
variable and vertices $\beta_{x_i}$ and $\gamma_{x_i}$ correspond to the 
\emph{negative} literal~$\overline{x_i}$. 
\begin{figure}[h!]
\label{fig:gx}
\begin{center}
\begin{tikzpicture}[scale=1, auto,swap]

\node[draw,circle,inner sep=2pt] (01) at (0,1) {$\beta_{x_i}$};
\node[draw,circle,inner sep=2pt,fill] (10) at (1,0) [label=below:$p_{x_i}$]{};
\path[edge] (01) -- node[weight] {$1$} (10);
\node[draw,circle,inner sep=2pt,fill] (21) at (2,1) [label=above:$q_{x_i}$]{};
\path[edge] (21) -- node[weight] {$3$} (10);
\node[draw,circle,inner sep=2pt,fill] (30) at (3,0) [label=below:$r_{x_i}$]{};
\path[edge] (21) -- node[weight] {$4$} (30);
\node[draw,circle,inner sep=2pt] (41) at (4,1) {$\alpha_{x_i}$};
\path[edge] (41) -- node[weight] {$4$} (30);
\node[draw,circle,inner sep=2pt] (51) at (6,1) {$\gamma_{x_i}$};
\path[edge] (41) -- node[weight] {$4$} (51);
\node[draw,circle,inner sep=2pt,fill] (70) at (7,0) [label=below:$y_{x_i}$]{};
\path[edge] (51) -- node[weight] {$4$} (70);
\node[draw,circle,inner sep=2pt,fill] (81) at (8,1) [label=above:$z_{x_i}$]{};
\path[edge] (81) -- node[weight] {$4$} (70);
\node[draw,circle,inner sep=2pt,fill] (90) at (9,0) [label=below:$s_{x_i}$]{};
\path[edge] (81) -- node[weight] {$3$} (90);
\node[draw,circle,inner sep=2pt,fill] (101) at (10,1) [label=above:$t_{x_i}$]{};
\path[edge] (101) -- node[weight] {$1$} (90);
\end{tikzpicture}

\caption{The gadget $\mcalg_{x_i}$.}
\end{center}
\end{figure}

The vertex $v_{j}$ associated to clause $C_{j}$, where $j\in \lbrack m]$, is
made adjacent to the vertices that correspond to the literals associated
with that clause. For example, if $C_{j}=(x_{1}\vee \overline{x_{2}})$ we
will connect the vertex $v_{j}$ with one of the vertices $\alpha
_{x_{1}},\beta _{x_{1}},\gamma _{x_{1}}$ and with one of the vertices $%
\alpha _{x_{2}},\beta _{x_{2}},\gamma _{x_{2}}$. In order to make these
connections in a consistent way, we first fix an arbitrary ordering over the
clauses. If the variable $x_{i}$ occurs as a positive literal in the clause $%
C_{j}$, then we add the edge $(v_{j},\alpha _{x_{i}})$ of weight 3. Next, if 
$C_{j}$ is the first clause that the variable $x_{i}$ occurs with a negative
literal (in the fixed ordering of the clauses), then we add the edge $%
(v_{j},\beta _{x_{i}})$ of weight 1. Finally, if the clause $C_{j}$ is the
second clause that the variable $x_{i}$ occurs as a negative literal, then
we add the edge $(v_{j},\gamma _{x_{i}})$ of weight 3. That is, if a
variable $x_{i}$ appears only two times in $\phi $, then only the two
distinguished vertices $\alpha _{x_{i}}$ and $\beta _{x_{i}}$ of $\mcalg_{x_{i}}$ are adjacent to a $v$-vertex. This completes the
construction of the graph $G$. Note that, by the construction of $G$, in any
maximum greedy matching of $G$, there are exactly four alternative ways to
match the edges of each of the subgraphs $\mcalg_{x_{i}}$, as
illustrated in Fig.~\ref{fig:man}-\ref{fig:bad2}.

\subsection{APX-completeness\label{APX-subsec}} 
  
In order to prove that \greedy is APX-complete, first we prove in the next lemma that given an assignment that satisfies at least $k$ 
clauses we can construct a greedy matching with weight at least $14n+k$. 
The intuition for this lemma is as follows. Starting with a given satisfying 
truth assignment~$\tau$ for the input formula~$\phi$, we first construct the 
matching $\man$ in every $\mcalg_{x_i}$ (cf.~Figure~\ref{fig:man}), and thus the
$\beta$-vertices are initially free to be matched to $v$-vertices. Then, if a 
variable $x_i$ is true in $\tau$, we change the matching of $\mcalg_{x_i}$ from 
$\man$ to $\map$ (cf.~Figure~\ref{fig:map}), such that only the $\alpha$-vertex 
(and not the $\beta$ and $\gamma$-vertices) of $\mcalg_{x_i}$ is free to be 
matched to a $v$-vertex. On the other hand, if the variable $x_i$ is false in 
$\tau$, then we either keep the matching $\man$ in $\mcalg_{x_i}$, or we replace
$\man$ with $\mann$ in $\mcalg_{x_i}$ (cf.~Figure~\ref{fig:mann}). Note that in 
$\man$ only $\beta_{x_i}$ is free to be matched, while in $\mann$ both 
$\beta_{x_i}$ and $\gamma_{x_i}$ are free to be matched with a $v$-vertex; in 
both cases the $\alpha$-vertex of $\mcalg_{x_i}$ is ``blocked'' from being 
matched to a $v$-vertex. Then, using the fact that $\tau$ satisfies at least $k$ clauses 
of $\phi$, we can construct a matching of~$G$ where $k$ $v$-vertices are matched 
and the total weight of this matching is at least~${14n+k}$.

\begin{lemma}
\label{lem:one}
If there is an assignment that satisfies at least $k$ clauses then, there is a greedy 
matching with weight at least $14n+k$.
\end{lemma}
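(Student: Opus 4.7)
My plan is to construct, from a truth assignment $\tau$ satisfying at least $k$ clauses, a specific greedy matching $\mcal$ of $G$ with $w(\mcal)\geq 14n+k$. The first step is to fix, for every clause $C_j$ satisfied by $\tau$, a single \emph{witness} literal making it true; this induces for every variable $x_i$ a count $c_i\in\{0,1,2\}$ of clauses attributed to it, with $\sum_{i=1}^n c_i=k$. When a clause admits several true literals, I prefer the witness whose incident clause edge has larger weight (weight $3$ via $\alpha_{x_i}$ or $\gamma_{x_i}$ rather than weight $1$ via $\beta_{x_i}$).

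For every $x_i$ I fix in advance one of the four gadget matching patterns $\map,\man,\mbad,\mann$ (of Figures~\ref{fig:man}--\ref{fig:bad2}) that the greedy execution will realise on $\mcalg_{x_i}$: if $x_i$ is true and $c_i\geq 1$, use $\map$ so that $\alpha_{x_i}$ stays free for its witness edge $(v_j,\alpha_{x_i})$; if $x_i$ is false and $c_i=2$, use $\mann$ so that both $\beta_{x_i}$ and $\gamma_{x_i}$ remain free; if $x_i$ is false and $c_i=1$, use $\man$ when the witness sits at $\beta_{x_i}$ and $\mann$ when it sits at $\gamma_{x_i}$; and if $c_i=0$, use $\mbad$, which internally saturates $\mcalg_{x_i}$ with weight $14$ while blocking all three distinguished vertices from any $v$-vertex.

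Next I would spell out the tie-breaking so that the Greedy Matching Procedure actually produces these patterns. In the weight-$4$ round the weight-$4$ subgraphs of distinct gadgets are vertex-disjoint, so each gadget's pattern can be realised independently by ordering the weight-$4$ edges of its subpath $q_{x_i},r_{x_i},\alpha_{x_i},\gamma_{x_i},y_{x_i},z_{x_i}$. In the weight-$3$ round I prioritise the designated witness edges $(v_j,\alpha_{x_i})$ and $(v_j,\gamma_{x_i})$ over the internal edges $p_{x_i}q_{x_i}$ and $z_{x_i}s_{x_i}$; the one-witness-per-clause property ensures that the set of designated weight-$3$ edges is itself a matching, so greedy adds all of them before turning to internal edges. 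In the weight-$1$ round I analogously prioritise the witness edges $(v_j,\beta_{x_i})$ over the internal $\beta_{x_i}p_{x_i}$ and $s_{x_i}t_{x_i}$.

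Finally I would tally. Reading edge weights directly from Figure~\ref{fig:gx}, the total weight contributed by each gadget together with its designated clause edges is $14$ for $\mbad$, $12+3=15$ for $\map$, $14+1=15$ for $\man$, and $12+3+1=16$ for $\mann$; in every case this equals $14+c_i$, and summing gives $w(\mcal)\geq\sum_{i=1}^n(14+c_i)=14n+k$. The main obstacle I anticipate is confirming that no opportunistic greedy choice blocks a designated edge: a designated weight-$3$ edge is always picked before any potentially blocking weight-$1$ edge, and two designated weight-$3$ edges cannot share a $v$-vertex because each clause has a unique witness, which together rule out the only possible conflicts.
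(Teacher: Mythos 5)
Your proposal is correct and follows essentially the same route as the paper's proof: realise one of the four gadget patterns $\man$, $\mann$, $\map$, $\mbad$ in each $\mcalg_{x_i}$ so that exactly the witness vertices among $\alpha_{x_i},\beta_{x_i},\gamma_{x_i}$ remain free, match the designated witness clause edges, and count $14+c_i$ per gadget. The one detail you leave implicit --- that a non-designated weight-3 clause edge cannot steal a $v$-vertex from a designated weight-1 witness --- does hold, because an $\alpha$- or $\gamma$-vertex is free after the weight-4 round only under $\map$ or $\mann$, i.e.\ only when its unique clause edge is itself designated.
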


\begin{proof}
Given an assignment that satisfies $k$ clauses we will construct a greedy 
matching of $G$ with weight $14n+k$ by making use of the three matchings $\man$,
$\mann$, and $\map$ of $\mcalg_{x_i}$, as illustrated in 
Figures~\ref{fig:man}-\ref{fig:map}. 
All these three matchings of $\mcalg_{x_i}$ are greedy; furthermore note that
there also exists a fourth greedy matching \mbad of $\mcalg_{x_i}$ 
(see Fig.~\ref{fig:bad2}) which will not be used in the proof of the lemma.

\begin{figure}[h!]
\begin{center}
\begin{tikzpicture}[scale=1, auto,swap]
\node[draw,circle,inner sep=2pt] (01) at (0,1) {$\beta_{x_i}$};
\node[draw,circle,inner sep=2pt,fill] (10) at (1,0) [label=below:$p$]{};
\path[edge] (01) -- node[weight] {$1$} (10);
\node[draw,circle,inner sep=2pt,fill] (21) at (2,1) [label=above:$q$]{};
\path[matched] (21) -- node[weight] {$3$} (10);
\node[draw,circle,inner sep=2pt,fill] (30) at (3,0) [label=below:$r$]{};
\path[edge] (21) -- node[weight] {$4$} (30);
\node[draw,circle,inner sep=2pt] (41) at (4,1) {$\alpha_{x_i}$};
\path[matched] (41) -- node[weight] {$4$} (30);
\node[draw,circle,inner sep=2pt] (51) at (6,1) {$\gamma_{x_i}$};
\path[edge] (41) -- node[weight] {$4$} (51);
\node[draw,circle,inner sep=2pt,fill] (70) at (7,0) [label=below:$y$]{};
\path[matched] (51) -- node[weight] {$4$} (70);
\node[draw,circle,inner sep=2pt,fill] (81) at (8,1) [label=above:$z$]{};
\path[edge] (81) -- node[weight] {$4$} (70);
\node[draw,circle,inner sep=2pt,fill] (90) at (9,0) [label=below:$s$]{};
\path[matched] (81) -- node[weight] {$3$} (90);
\node[draw,circle,inner sep=2pt,fill] (101) at (10,1)[label=above:$t$] {};
\path[edge] (101) -- node[weight] {$1$} (90);
\end{tikzpicture}
\caption{The matching $\man$ with weight 14 for the subgraph $\mcalg_{x_i}$. 
For simplicity of notation we do not include the subscript $x_i$ in the non-distinguished vertices.}
\label{fig:man}
\end{center}
\end{figure}

\begin{figure}[h!]
\begin{center}
\begin{tikzpicture}[scale=1, auto,swap]
\node[draw,circle,inner sep=2pt] (01) at (0,1) {$\beta_{x_i}$};
\node[draw,circle,inner sep=2pt,fill] (10) at (1,0) [label=below:$p$]{};
\path[edge] (01) -- node[weight] {$1$} (10);
\node[draw,circle,inner sep=2pt,fill] (21) at (2,1) [label=above:$q$]{};
\path[matched] (21) -- node[weight] {$3$} (10);
\node[draw,circle,inner sep=2pt,fill] (30) at (3,0) [label=below:$r$]{};
\path[edge] (21) -- node[weight] {$4$} (30);
\node[draw,circle,inner sep=2pt] (41) at (4,1) {$\alpha_{x_i}$};
\path[matched] (41) -- node[weight] {$4$} (30);
\node[draw,circle,inner sep=2pt] (51) at (6,1) {$\gamma_{x_i}$};
\path[edge] (41) -- node[weight] {$4$} (51);
\node[draw,circle,inner sep=2pt,fill] (70) at (7,0) [label=below:$y$]{};
\path[edge] (51) -- node[weight] {$4$} (70);
\node[draw,circle,inner sep=2pt,fill] (81) at (8,1) [label=above:$z$]{};
\path[matched] (81) -- node[weight] {$4$} (70);
\node[draw,circle,inner sep=2pt,fill] (90) at (9,0) [label=below:$s$] {};
\path[edge] (81) -- node[weight] {$3$} (90);
\node[draw,circle,inner sep=2pt,fill] (101) at (10,1) [label=above:$t$]{};
\path[matched] (101) -- node[weight] {$1$} (90);
\end{tikzpicture}
\caption{The matching $\mann$ with weight 12 for the subgraph $\mcalg_{x_i}$.}
\label{fig:mann}
\end{center}
\end{figure}

\begin{figure}[h!]
\begin{center}
\begin{tikzpicture}[scale=1, auto,swap]
\node[draw,circle,inner sep=2pt] (01) at (0,1) {$\beta_{x_i}$};
\node[draw,circle,inner sep=2pt,fill] (10) at (1,0) [label=below:$p$]{};
\path[matched] (01) -- node[weight] {$1$} (10);
\node[draw,circle,inner sep=2pt,fill] (21) at (2,1) [label=above:$q$]{};
\path[edge] (21) -- node[weight] {$3$} (10);
\node[draw,circle,inner sep=2pt,fill] (30) at (3,0) [label=below:$r$]{};
\path[matched] (21) -- node[weight] {$4$} (30);
\node[draw,circle,inner sep=2pt] (41) at (4,1) {$\alpha_{x_i}$};
\path[edge] (41) -- node[weight] {$4$} (30);
\node[draw,circle,inner sep=2pt] (51) at (6,1) {$\gamma_{x_i}$};
\path[edge] (41) -- node[weight] {$4$} (51);
\node[draw,circle,inner sep=2pt,fill] (70) at (7,0) [label=below:$y$]{};
\path[matched] (51) -- node[weight] {$4$} (70);
\node[draw,circle,inner sep=2pt,fill] (81) at (8,1) [label=above:$z$]{};
\path[edge] (81) -- node[weight] {$4$} (70);
\node[draw,circle,inner sep=2pt,fill] (90) at (9,0) [label=below:$s$]{};
\path[matched] (81) -- node[weight] {$3$} (90);
\node[draw,circle,inner sep=2pt,fill] (101) at (10,1) [label=above:$t$]{};
\path[edge] (101) -- node[weight] {$1$} (90);
\end{tikzpicture}
\caption{The matching $\map$ with weight 12 for the subgraph $\mcalg_{x_i}$.}
\label{fig:map}
\end{center}
\end{figure}

We construct the greedy matching of $G$ with weight $14n+k$ as follows. 
Firstly, we set all the matchings for the subgraphs $\mcalg_{x_i}$ to 
be the greedy matching $\man$, thus incurring a total weight of $14n$ from the 
currently matched edges. 
Then we process sequentially each clause $C_j$ of the formula $\phi$. 
If a clause $C_j$ is satisfied in the given truth assignment by at least one 
\emph{positive} literal, then we choose one of these literals arbitrarily, say 
$x_i$, and we change the matching of $\mcalg_{x_i}$ to $\map$; 
furthermore we match the edge $(v_{j},\alpha_{x_i})$ which has weight 3. 
In this case we replaced the matched edges $(p_{x_i},q_{x_i})$ and 
$(r_{x_i},\alpha_{x_i})$ of $\mcalg_{x_i}$ with total weight 7 by the matched 
edges $(\beta_{x_i},p_{x_i})$, $(q_{x_i},r_{x_i})$, and $(v_j,\alpha_{x_i})$ 
with total weight 8, i.e.~we increased the weight of the matching by 1.

Assume that a clause $C_j$ is satisfied in the given assignment only by 
\emph{negative} literals. 
If at least one of these literals of $C_j$ corresponds to a $\beta$-vertex, then
we match the edge $(v_{j},\beta_{x_i})$ of weight 1. Thus in this case we also 
increase the total weight of the matched edges by 1. 
Finally, if all of these literals of $C_j$ correspond to $\gamma$-vertices, then
we choose one of them arbitrarily, say $\overline{x_i}$, and we change the matching 
of $\mcalg_{x_i}$ to $\mann$; furthermore we match the edge $(v_{j},\gamma_{x_i})$ 
of weight 3. 
In this case we replaced the matched edges $(\gamma_{x_i},y_{x_i})$ and 
$(z_{x_i},s_{x_i})$ of $\mcalg_{x_i}$ with total weight 7 by the matched edges 
$(v_j,\gamma_{x_i})$, $(y_{x_i},z_{x_i})$, and $(s_{x_i},t_{x_i})$ with total 
weight 8, i.e.~we increased the weight of the matching by 1. 
This completes the required matching $\mathcal{M}$ of the graph $G$.

Since we started with a matching of total weight $14n$ and we added weight 1 for
each of the $k$ satisfied clauses in $\phi$, note that the total weight of 
$\mathcal{M}$ is $14n+k$. 
In this matching $\mathcal{M}$, each of the induced subgraphs $\mcalg_{x_i}$ of 
$G$ is greedily matched. 
Furthermore all the remaining edges of $G$ are edges that join a $v$-vertex with
a distinguished vertex $\alpha_{x_i}$ (resp.~$\beta_{x_i}$, $\gamma_{x_i}$). 
Note that the weight of each of these edges is smaller than or equal to the 
weight of the edges adjacent to $\alpha_{x_i}$ (resp.~$\beta_{x_i}$, $\gamma_{x_i}$)
within the subgraph $\mcalg_{x_i}$. 
Thus, the matching $\mathcal{M}$ of $G$ can be constructed greedily. Moreover, 
since $\mathcal{M}$ can be potentially further extended greedily to a matching 
with larger weight, it follows that the maximum greedy matching of $G$ is at 
least $14n+k$.
\end{proof}

\medskip

Next we prove in Lemma~\ref{lem:at-least} that, if there is a greedy matching 
with weight $14n+k$, then there is an assignment that satisfies at least $k$ 
clauses.
In order to prove Lemma~\ref{lem:at-least}, first we prove in Lemma~\ref{lem:one-of-ag} a crucial property of the constructed graph $G$, 
namely that in any greedy matching at most one of the vertices $\alpha_{x_i}$ and $\gamma_{x_i}$ can be matched with a $v$-vertex.
\begin{lemma}
\label{lem:one-of-ag}
Let $\mathcal{M}$ be an arbitrary greedy matching of $G$ and let 
$i\in \{1,2,\ldots,n\}$. Then, in the subgraph $\mcalg_{x_i}$, at most one of 
the vertices $\alpha_{x_i}$ and $\gamma_{x_i}$ can be matched with a $v$-vertex.
\end{lemma}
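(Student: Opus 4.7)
The plan is to show that whether $\alpha_{x_i}$ or $\gamma_{x_i}$ is free to be matched to a $v$-vertex is decided entirely during the weight-$4$ phase of the greedy procedure, and that the weight-$4$ subgraph local to $\mcalg_{x_i}$ forces at least one of them to be internally matched.

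First I would isolate the relevant edges. By the construction of $G$, the only edges incident to $\alpha_{x_i}$ or $\gamma_{x_i}$ that lead to a $v$-vertex carry weight $3$, while \emph{every} edge incident to $\alpha_{x_i}$ and $\gamma_{x_i}$ that stays inside $\mcalg_{x_i}$ has weight $4$. Since the greedy procedure processes edges in decreasing order of weight, by the time weight-$3$ edges are considered each of $\alpha_{x_i}$ and $\gamma_{x_i}$ is either already matched inside $\mcalg_{x_i}$ (in which case it cannot be matched to any $v$-vertex) or still free. Hence, to prove the lemma it suffices to show that \emph{at most one} of $\alpha_{x_i},\gamma_{x_i}$ remains free at the end of the weight-$4$ phase.

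Next I would look at the subgraph induced by the weight-$4$ edges of $\mcalg_{x_i}$: this is precisely the path $q_{x_i} - r_{x_i} - \alpha_{x_i} - \gamma_{x_i} - y_{x_i} - z_{x_i}$ of five weight-$4$ edges. In order for $\alpha_{x_i}$ to remain unmatched after the weight-$4$ phase, both weight-$4$ edges incident to $\alpha_{x_i}$, namely $(r_{x_i},\alpha_{x_i})$ and $(\alpha_{x_i},\gamma_{x_i})$, must have been deleted without being matched, which can only happen if the endpoints $r_{x_i}$ and $\gamma_{x_i}$ get matched first by other weight-$4$ edges. The only available options for these vertices are $(q_{x_i},r_{x_i})$ and $(\gamma_{x_i},y_{x_i})$. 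In particular, $\gamma_{x_i}$ then becomes matched to $y_{x_i}$ and is blocked from any $v$-vertex. The symmetric argument shows that $\gamma_{x_i}$ being free forces $\alpha_{x_i}$ to be matched internally to $r_{x_i}$.

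The main obstacle is making the above case analysis clean under \emph{any} order the greedy algorithm might use to break ties among the five weight-$4$ edges; to do this I would argue purely about the state when the weight-$4$ phase terminates (the resulting matching must be maximal inside this subpath), so that the conclusion does not depend on the particular tie-breaking rule. Combining the two directions, $\alpha_{x_i}$ and $\gamma_{x_i}$ cannot both be free at the moment the weight-$3$ edges start being processed, which proves that at most one of them can end up matched to a $v$-vertex in $\mathcal{M}$.
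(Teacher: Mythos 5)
Your proof is correct and follows essentially the same route as the paper's: both arguments rest on the single observation that the weight-$4$ edge $(\alpha_{x_i},\gamma_{x_i})$ forces at least one of its endpoints to be matched by a weight-$4$ edge during the weight-$4$ phase, and hence to be unavailable when the weight-$3$ edges to $v$-vertices are processed. The paper states this as a short contradiction argument, whereas you unfold it into an explicit maximality analysis of the weight-$4$ subpath, but the underlying idea is identical.
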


\begin{proof}
The proof is done by contradiction. Assume otherwise that both $\alpha_{x_i}$ 
and $\gamma_{x_i}$ are matched with some $v$-vertices in $\mathcal{M}$. 
Note that both these edges that connect the vertices $\alpha_{x_i}$ and 
$\gamma_{x_i}$ with the corresponding $v$-vertices have weight~3. Furthermore, 
none of the edges $(\alpha_{x_i},r)$, $(\gamma_{x_i},y)$, and 
$(\alpha_{x_i},\gamma_{x_i})$ belong to $\mathcal{M}$. 
Thus, since the weight of the edge $(\alpha_{x_i},\gamma_{x_i})\notin 
\mathcal{M}$ is~4, it follows $\mathcal{M}$ is not greedy, which is a 
contradiction. 
That is, if both edges $(\alpha_{x_i},r)$ and $(\gamma_{x_i},y)$ of the subgraph
$\mcalg_{x_i}$ are not matched within $\mathcal{M}$, then 
$(\alpha_{x_i},\gamma_{x_i})\in \mathcal{M}$, as it is illustrated in the ``bad''
matching $\mathcal{M}^{b}$ of Fig.~\ref{fig:bad2}.
\end{proof}

\begin{figure}[h!]
\begin{center}
\begin{tikzpicture}[scale=1, auto,swap]
\node[draw,circle,inner sep=2pt] (01) at (0,1) {$\beta_{x_i}$};
\node[draw,circle,inner sep=2pt,fill] (10) at (1,0) [label=below:$p$]{};
\path[matched] (01) -- node[weight] {$1$} (10);
\node[draw,circle,inner sep=2pt,fill] (21) at (2,1) [label=above:$q$]{};
\path[edge] (21) -- node[weight] {$3$} (10);
\node[draw,circle,inner sep=2pt,fill] (30) at (3,0) [label=below:$r$]{};
\path[matched] (21) -- node[weight] {$4$} (30);
\node[draw,circle,inner sep=2pt] (41) at (4,1) {$\alpha_{x_i}$};
\path[edge] (41) -- node[weight] {$4$} (30);
\node[draw,circle,inner sep=2pt] (51) at (6,1) {$\gamma_{x_i}$};
\path[matched] (41) -- node[weight] {$4$} (51);
\node[draw,circle,inner sep=2pt,fill] (70) at (7,0) [label=below:$y$]{};
\path[edge] (51) -- node[weight] {$4$} (70);
\node[draw,circle,inner sep=2pt,fill] (81) at (8,1) [label=above:$z$]{};
\path[matched] (81) -- node[weight] {$4$} (70);
\node[draw,circle,inner sep=2pt,fill] (90) at (9,0) [label=below:$s$]{};
\path[edge] (81) -- node[weight] {$3$} (90);
\node[draw,circle,inner sep=2pt,fill] (101) at (10,1) [label=above:$t$] {};
\path[matched] (101) -- node[weight] {$1$} (90);
\end{tikzpicture}
\caption{The ``bad'' matching \mbad for the subgraph $\mcalg_{x_i}$ with weight 14.}
\label{fig:bad2}
\end{center}
\end{figure}

We are now ready to prove Lemma~\ref{lem:at-least}.

\begin{lemma}
\label{lem:at-least}
If there is a greedy matching with weight at least $14n+k$ in $G$, then there exists 
an assignment that satisfies at least $k$ clauses of the formula $\phi$.
\end{lemma}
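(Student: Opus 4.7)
The plan is to extract a truth assignment $\tau$ from $\mathcal{M}$ and show that it satisfies at least $k$ clauses via a weight-accounting argument over the gadgets. I set $\tau(x_i)=\text{true}$ if and only if $\alpha_{x_i}$ is matched in $\mathcal{M}$ to some $v$-vertex, and $\tau(x_i)=\text{false}$ otherwise. Every edge of $\mathcal{M}$ is either internal to some $\mcalg_{x_i}$ or joins a $v$-vertex to a distinguished vertex belonging to a \emph{unique} gadget, so $w(\mathcal{M})=\sum_{i=1}^{n} g_i$, where $g_i$ denotes the total weight of matched edges incident to $\mcalg_{x_i}$.

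For each $i$, let $c_i$ be the number of matched edges of the form $(v_j,x)$ with $x\in\{\alpha_{x_i},\beta_{x_i},\gamma_{x_i}\}$ whose associated literal in $C_j$ is true under $\tau$. The heart of the argument is the per-gadget inequality
\[
g_i \;\leq\; 14 + c_i, \qquad i=1,\dots,n.
\]
I would establish this by a finite case analysis on the weight-$4$ phase of the greedy procedure inside $\mcalg_{x_i}$: the subgraph of $\mcalg_{x_i}$ spanned by its weight-$4$ edges is a path on $6$ vertices, which admits exactly four maximal matchings. For each of these four choices, the free distinguished vertices are fully determined, so the possible weight-$3$ and weight-$1$ extensions (both internal and to $v$-vertices) can be enumerated, and the inequality is checked in every entry of the resulting finite table. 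Equality is attained exactly for the configurations of $\man$, $\mann$, $\map$ extended by external $v$-edges that were used in the proof of Lemma~\ref{lem:one}; Lemma~\ref{lem:one-of-ag} is used inside the analysis to rule out the impossible configuration in which $\alpha_{x_i}$ and $\gamma_{x_i}$ would both be matched to $v$-vertices.

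Summing the per-gadget bound and using $w(\mathcal{M})\geq 14n+k$ yields
\[
14n+k \;\leq\; \sum_{i=1}^{n} g_i \;\leq\; 14n + \sum_{i=1}^{n} c_i,
\]
so $\sum_{i} c_i \geq k$. Each matched $v_j$ is adjacent to a distinguished vertex of a unique gadget and therefore contributes to at most one $c_i$; since the clauses $C_j$ are pairwise distinct, the bound $\sum_{i} c_i\geq k$ implies that $\tau$ satisfies at least $k$ clauses of $\phi$.

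The main obstacle is the per-gadget inequality $g_i \leq 14+c_i$. A cruder version with $c_i$ replaced by the total number $n_{v,i}$ of external matched edges at $\mcalg_{x_i}$ would suffice in almost every configuration, but it fails precisely when greedy fixes $\{(q_{x_i},r_{x_i}),(\gamma_{x_i},y_{x_i})\}$ at weight $4$ and then matches \emph{both} $\alpha_{x_i}$ and $\beta_{x_i}$ externally at weights $3$ and $1$: the $\alpha$-edge forces $\tau(x_i)=\text{true}$, which renders the $\beta$-edge inconsistent and makes $c_i < n_{v,i}$. What rescues the argument is that picking $(v_{j'},\beta_{x_i})$ instead of $(\beta_{x_i},p_{x_i})$ at weight $1$ costs one unit of internal weight, so $g_i$ reaches only $15=14+c_i$. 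Making this compensation explicit is exactly why the case analysis on the four weight-$4$ maximal matchings is unavoidable.
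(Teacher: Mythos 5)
Your proof is correct and follows the same basic strategy as the paper's: extract a truth assignment from which distinguished vertices are matched to $v$-vertices, and show by a per-gadget weight account that each $\mcalg_{x_i}$ contributes at most $14$ plus the number of matched $v$-edges whose literal is made true. The one genuine difference is how the awkward configuration is handled in which greedy fixes $(q_{x_i},r_{x_i})$ and then matches \emph{both} $\alpha_{x_i}$ and $\beta_{x_i}$ to $v$-vertices. The paper first normalizes $\mcal$ so that every gadget is internally matched as one of $\man$, $\mann$, $\map$, $\mbad$ — rerouting $(v_{j'},\beta_{x_i})$ to $(\beta_{x_i},p_{x_i})$ — and for this it invokes the assumption that $\mcal$ is a \emph{maximum-weight} greedy matching. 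You instead keep $\mcal$ as is and absorb the inconsistent $\beta$-edge into the accounting: the lost unit of internal weight at $p_{x_i}$ exactly pays for the fact that $c_i<n_{v,i}$, giving $g_i=15=14+c_i$. This buys a marginally more general statement (any greedy matching of weight $\ge 14n+k$ suffices, with no normalization step) at the cost of having to verify the inequality over all four maximal matchings of the weight-$4$ path and all their weight-$3$/weight-$1$ extensions — a finite check that does go through, including the $\mann$-type case where both $\beta_{x_i}$ and $\gamma_{x_i}$ are matched externally and $g_i=16=14+2$. Both arguments rely on Lemma~\ref{lem:one-of-ag} in the same way, and both conclude identically by summing over gadgets and observing that each matched $v$-vertex lies in exactly one $c_i$.
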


\begin{proof}
Let \mcal be a maximum weight greedy matching of $G$ and assume that \mcal has 
weight at least $14n+k$. 
First we show that we can assume without loss of generality that, for every 
$i\in [n]$, the edges of the induced subgraph $\mcalg_{x_i}$ are matched in \mcal 
according to one of the four matchings $\man$, $\mann$, $\map$, and \mbad (see Figures~\ref{fig:man}-\ref{fig:bad2}). 
Assume that the edge $(\gamma_{x_i},y_{x_i})$ is matched in \mcal. 
Then clearly the edge $(z_{x_i},s_{x_i})$ is also matched in \mcal, since this is the only valid greedy option for the right part of $\mcalg_{x_i}$. 
Assume that the vertex $\gamma_{x_i}$ is matched in \mcal with a vertex different 
than $y_{x_i}$. Then similarly the edges $(y_{x_i},z_{x_i})$ and 
$(s_{x_i},t_{x_i})$ are matched in \mcal. On the other hand, assume that the 
edge $(\alpha_{x_i},r_{x_i})$ is matched in \mcal. 
Then the edge $(p_{x_i},q_{x_i})$ is also matched in \mcal due to the assumption
that \mcal is greedy.
Finally assume that the vertex $\alpha_{x_i}$ is matched in \mcal with a vertex
different than $r_{x_i}$. 
Then, since \mcal is greedy, the edge $(q_{x_i},r_{x_i})$ is matched in \mcal. 
Furthermore, since \mcal has the greatest weight among the greedy matchings of 
$G$ by assumption, the vertex $\beta_{x_i}$ is matched in \mcal either with 
vertex $p_{x_i}$ or with its adjacent $v$-vertex. 
If $\beta_{x_i}$ is matched with the $v$-vertex, then we replace this matched 
edge in \mcal by the matched edge $(\beta_{x_i},p_{x_i})$ (if the vertex
$p_{x_i}$ is unmatched) and we get an other greedy matching with the same weight. 
Therefore, we can assume without loss of generality that, for every $i\in [n]$, 
the edges of the induced subgraph $\mcalg_{x_i}$ are matched in \mcal 
according to one of the four matchings $\man$, $\mann$, $\map$, and \mbad, see 
Figures~\ref{fig:man}-\ref{fig:bad2}.

We construct from \mcal a truth assignment that satisfies at least $k$ clauses 
of the formula $\phi$, as follows. 
If the edges of the induced subgraph $\mcalg_{x_i}$ are matched in \mcal according to one of the matchings $\man$ or $\mann$, then we set the value of $x_i$ to \emph{false}. 
If the edges of $\mcalg_{x_i}$ are matched according to $\map$, then we set the value of $x_i$ to \emph{true}. 
Otherwise, if the edges of $\mcalg_{x_i}$ are matched according to \mbad, then set the truth value of $x_i$ arbitrarily. 
Let now $i \in [n]$. Since the edges of the induced subgraph $\mcalg_{x_i}$ are matched according to one of the matchings $\man$, $\mann$, $\map$, and \mbad, as we proved above, 
it follows that the vertices $\alpha_{x_i}$ and $\beta_{x_i}$ are not simultaneously matched with their associated $v$-vertices in \mcal. 
Furthermore, Lemma~\ref{lem:one-of-ag} implies that at most one of the vertices 
$\alpha_{x_i}$ and $\gamma_{x_i}$ can be matched with their associated $v$-vertices
in \mcal. 
Therefore the constructed truth assignment is valid.

Let $i \in [n]$. If $\mcalg_{x_i}$ is matched according to \mbad in \mcal, then 
$\mcalg_{x_i}$ clearly contributes weight 14 to the total weight of \mcal. 
Assume that $\mcalg_{x_i}$ is matched according to $\map$ in \mcal, i.e.~assume 
that $\alpha_{x_i}$ is matched with a $v$-vertex. 
Then $\gamma_{x_i}$ is matched with $y_{x_i}$, and thus the right part of 
$\mcalg_{x_i}$ (i.e.~the part between vertices $\gamma_{x_i}$ and $t_{x_i}$) 
contributes weight 7 to the total weight of \mcal. 
Furthermore the left part of $\mcalg_{x_i}$ (i.e.~the part between vertices 
$\beta_{x_i}$ and $\alpha_{x_i}$) contributes weight 1+4+3=8 to the total weight
of \mcal. 
That is, $\mcalg_{x_i}$ contributes weight 15 to the total weight of \mcal. 
Assume now that $\mcalg_{x_i}$ is matched according to $\man$ or $\mann$ in \mcal. 
If $\beta_{x_i}$ is matched with $p_{x_i}$ in \mcal, then the left part of 
$\mcalg_{x_i}$contributes weight 3+4=7 to the total weight of \mcal. 
Otherwise, if $\beta_{x_i}$ is matched with its adjacent $v$-vertex in \mcal, 
then the left part of $\mcalg_{x_i}$ contributes weight 1+3+4=8 to the total 
weight of \mcal. 
Similarly, if $\gamma_{x_i}$ is matched with $y_{x_i}$ in \mcal, then the right 
part of $\mcalg_{x_i}$ contributes weight 4+3=7 to the total weight of \mcal. 
Otherwise, if $\gamma_{x_i}$ is matched with its adjacent $v$-vertex in \mcal, 
then the right part of $\mcalg_{x_i}$ contributes weight 3+4+1=8 to the total 
weight of \mcal. 
Summarizing, if $0 \leq \ell \leq 2$ of the vertices $\{\alpha_{x_i}, 
\beta_{x_i}, \gamma_{x_i}\}$ are matched with $v$-vertices in \mcal, then 
$\mcalg_{x_i}$ contributes weight $14+\ell$ to the total weight of \mcal.

Therefore, since $G$ has $n$ induced subgraphs and the weight of \mcal is at 
least $14n+k$, it follows that $\ell\geq k$ $v$-vertices are matched in \mcal. 
For every vertex $v_j$ that is matched with a vertex $\alpha_{x_i}$ in \mcal, 
the clause $C_j$ contains the literal $x_i$ and the variable $x_i$ is set to 
true by the construction of the truth assignment. Thus $C_j$ is satisfied. 
Similarly, for every vertex $v_j$ that is matched with a vertex $\beta_{x_i}$
or $\gamma_{x_i}$ in \mcal, the clause $C_j$ contains the literal $\overline{x_i}$ 
and the variable $x_i$ is set to false by the construction of the truth 
assignment. Thus $C_j$ is again satisfied. 
Therefore, since $\ell\geq k$ $v$-vertices are matched in \mcal, it follows that
there are $\ell\geq k$ satisfied clauses of $\phi$ in the constructed assignment.
\end{proof}

\medskip

In the following theorem we conclude with the main result of this section.

\begin{theorem}
\label{thm:inai} \greedy is \emph{strongly NP-hard}
and \emph{APX-complete}. In particular, unless P=NP, \textsc{GreedyMatching}%
\xspace admits no PTAS, even on graphs with maximum degree at most 3 and with
at most three different integer weight values.
\end{theorem}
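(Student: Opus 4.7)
The plan is to assemble Lemmas~\ref{lem:one} and~\ref{lem:at-least} into the exact equality $\opt(G)=14n+\opt(\phi)$, where $\opt(\phi)$ denotes the maximum number of simultaneously satisfiable clauses of~$\phi$, and then to harvest all four claims of the theorem from this correspondence. Strong NP-hardness is immediate once the equality is in hand: the construction of $G$ from $\phi$ is polynomial-time computable, $G$ has $10n+m$ vertices and $9n+2m$ edges, and all edge weights are integers bounded by a constant, so the reduction is strongly polynomial. Combined with the strong NP-hardness of the decision version of MAX2SAT(3)~\cite{Ausiello1999,RRR98}, this yields strong NP-hardness of \greedy.

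Next I would verify the structural restrictions. By inspection of the gadget $\mcalg_{x_i}$ every non-distinguished vertex has degree $2$; each of the distinguished vertices $\alpha_{x_i}$, $\beta_{x_i}$, $\gamma_{x_i}$ has degree $2$ inside the gadget together with at most one edge to a $v$-vertex; and each clause-vertex $v_j$ is incident to at most two gadget vertices since $|C_j|\le 2$. Hence $G$ has maximum degree $3$, and the set of edge weights used is $\{1,3,4\}$, which has exactly three elements, matching the statement.

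For APX-completeness I would set up an L-reduction from MAX2SAT(3) to \greedy. Membership in APX is for free: by Avis~\cite{Avis83}, any greedy matching is a $\tfrac{1}{2}$-approximation of the maximum weight matching, hence also of the maximum weight greedy matching, and such a matching is computable in polynomial time. For the L-reduction, after the WLOG preprocessing of $\phi$ every variable appears in at least two clauses, so the total literal count satisfies $2n\le 2m$, i.e.,~$n\le m$; combined with the standard MAX-2SAT lower bound $\opt(\phi)\ge m/2$, this yields
\[
\opt(G)\;=\;14n+\opt(\phi)\;\le\;14m+\opt(\phi)\;\le\;29\,\opt(\phi),
\]
which is the first L-reduction condition with $\alpha=29$. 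The second condition follows from the constructive proof of Lemma~\ref{lem:at-least}: from any greedy matching of weight $w$ one recovers in polynomial time an assignment satisfying some $k'\ge w-14n$ clauses, so $\opt(\phi)-k'\le \opt(G)-w$ by the master equality, giving $\beta=1$. The step I expect to require the most care is confirming that the ``normalization'' of an arbitrary greedy matching into one of the four canonical per-gadget forms $\man,\mann,\map,\mbad$ (cf.~the opening of the proof of Lemma~\ref{lem:at-least}) is algorithmic and runs in polynomial time per gadget; once that is checked, APX-hardness of MAX2SAT(3)~\cite{Ausiello1999,RRR98} combined with the L-reduction yields APX-completeness of \greedy, and the non-existence of a PTAS unless $\ptime=\NP$ is an immediate corollary of APX-hardness.
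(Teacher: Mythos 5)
Your proposal is correct and follows essentially the same route as the paper: both rest on Lemmas~\ref{lem:one} and~\ref{lem:at-least} to get the correspondence $\opt(G)=14n+\mathrm{OPT}(\phi)$, both get membership in APX from the $\tfrac12$-approximation of any greedy matching, and both use the lower bound on $\mathrm{OPT}(\phi)$ relative to $n$ to transfer inapproximability. The only difference is presentational: you package the hardness as an explicit L-reduction with $\alpha=29$, $\beta=1$ (using $\mathrm{OPT}(\phi)\ge m/2$ and $n\le m$), whereas the paper directly computes that a $(1-\varepsilon)$-approximation for \greedy yields a $(1-33\varepsilon)$-approximation for MAX2SAT(3) via $\mathrm{OPT}(\phi)\ge\tfrac{7}{16}n$; these are the same argument with slightly different constants.
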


\begin{proof}
It follows by Lemmas~\ref{lem:one} and~\ref{lem:at-least} that there is a
greedy matching \mcal in $G$ with weight at least $14n+k$ if
and only if there is a truth assignment that satisfies at least $k$ clauses
in the 2SAT(3) formula $\phi $. Thus it follows that \textsc{GreedyMatching} is NP-hard,
since MAX2SAT(3) is also NP-hard~\cite{Ausiello1999,RRR98}. Furthermore, since the graph $G$ has three
different weight values (namely 1, 3, and 4), it follows that \textsc{%
GreedyMatching} is strongly NP-hard.

Denote by OPT$_{\text{Max2SAT(3)}}(\phi )$ the greatest number of clauses
that can be simultaneously satisfied by a truth assignment of $\phi $.
Furthermore denote by OPT$_{\text{Greedy}}(G)$ the maximum weight of a
greedy matching of the graph $G$ that is constructed from $\phi $ by our
reduction. Recall by construction that $G$ has 3 different integer weights
and the maximum degree is 3. Then Lemma~\ref{lem:one} implies that OPT$_{%
\text{Greedy}}(G)\geq 14n+$OPT$_{\text{Max2SAT(3)}}(\phi )$. Note that a
random truth assignment satisfies each clause of $\phi $ with probability $%
\frac{7}{8}$, and thus there exists a truth assignment that satisfies at
least $\frac{7}{8}m$ clauses of $\phi $, where $m$ is the number of clauses,
and thus OPT$_{\text{Max2SAT(3)}}(\phi )\geq \frac{7}{8}m$. Since every
clause has at most 2 literals in $\phi $, it follows that $m\geq \frac{n}{2}$%
, and thus OPT$_{\text{Max2SAT(3)}}(\phi )\geq \frac{7}{8}m\geq \frac{7}{16}n
$.

Assume that there is a PTAS for computing OPT$_{\text{Greedy}}(G)$. Then for
every $\varepsilon >0$ we can compute in polynomial time a greedy matching $%
\mathcal{M}$ of $G$ such that $|\mathcal{M}|\geq (1-\varepsilon )\cdot $OPT$%
_{\text{Greedy}}(G)$. Given such a matching $\mathcal{M}$ we can compute by
Lemma \ref{lem:at-least} a truth assignment $\tau $ of $\phi $ such that $%
|\tau (\phi )|\geq |\mathcal{M}|-14n$. Therefore:%
\begin{eqnarray*}
|\tau (\phi )| &\geq &(1-\varepsilon )\cdot OPT_{\text{Greedy}}(G)-14n \\
&\geq &(1-\varepsilon )\cdot (14n+OPT_{\text{Max2SAT(3)}}(\phi ))-14n \\
&\geq &(1-\varepsilon )\cdot OPT_{\text{Max2SAT(3)}}(\phi )-14\varepsilon
\cdot \frac{16}{7}OPT_{\text{Max2SAT(3)}}(\phi ) \\
&\geq &(1-33\varepsilon )\cdot OPT_{\text{Max2SAT(3)}}(\phi )
\end{eqnarray*}

That is, assuming a PTAS for computing OPT$_{\text{Greedy}}(G)$, we obtain a
PTAS for computing OPT$_{\text{Max2SAT(3)}}(\phi )$. This is a contradiction
by \cite{Ausiello1999}, unless P=NP. This proves that \textsc{GreedyMatching}%
\xspace is APX-hard. Furthermore \greedy clearly
belongs to the class APX, as any greedy matching algorithm achieves an $%
\frac{1}{2}$-approximation for \greedy, and thus 
\greedy is APX-complete.
\end{proof}

\subsection{Hardness of \greedy in Bipartite graphs
\label{hardness-bipartite-subsec}}

The graph $G$ that we constructed from $\phi$ (see Section~\ref%
{construction-subsec}) is not necessarily bipartite, as it may contain an
odd-length cycle. More specifically, it is possible that the following cycle
of length 9 exists: 
$$
v \rightarrow \beta_{x_i} \rightarrow p \rightarrow q \rightarrow r
\rightarrow \alpha_{x_i} \rightarrow v^{\prime }\rightarrow \gamma_{x_j}
\rightarrow \alpha_{x_j} \rightarrow v.
$$
However, as we prove in this section, \greedy remains
strongly NP-hard also when the graph is in addition
bipartite.

To prove this (cf.~Theorem~\ref{thm:bip-hard}), we slightly modify our
reduction of Section~\ref{construction-subsec} and the proofs of Section \ref%
{APX-subsec}, as follows. We start with a 2-CNF formula $\phi $, where every
variable appears in an \emph{arbitrary} number of clauses. We may assume
without loss of generality that every variable appears in $\phi $ at least
three times; otherwise we may add dummy copies of existing clauses. Then we
create from $\phi $ an equivalent 2-CNF formula $\phi ^{\prime }$ using a
technique of Papadimitriou and Yannakakis~\cite{PY91}. More specifically,
for every variable $x_{i}$ that appears $l\geq 3$ times in $\phi $, we
replace $x_{i}$ by $l$ new variables $x_{i_{1}},\ldots ,x_{i_{l}}$, one for
every clause of $\phi _{1}$ in which $x_{i}$ initially appeared. Furthermore
we add the $l$ extra clauses $(\overline{x_{i_{1}}}\vee x_{i_{2}})$, $(%
\overline{x_{i_{2}}}\vee x_{i_{3}})$, $\ldots $, $(\overline{x_{i_{l}}}\vee
x_{i_{1}})$. Denote by $\phi ^{\prime }$ the resulting 2-CNF formula after
performing these operations for every $i=1,2,\ldots ,n$. Note that in $\phi
^{\prime }$ a variable $x_{i_{j}}$ occurs exactly in three clauses: two
times as $\overline{x_{i_{j}}}$ and one as $x_{i_{j}}$ if $x_{i_{j}}$ was
negative in $\phi $, or two times as $x_{i_{j}}$ and one as $\overline{%
x_{i_{j}}}$ if $x_{i_{j}}$ was positive in $\phi $. Furthermore, each
variable $x_{i_{j}}$ occurs in one old clause from $\phi $ and in two new
clauses in $\phi ^{\prime }$.

We will use again the gadgets $\mcalg_{x_{i_{j}}}$ for each
variable $x_{i_{j}}$ with a small modification. If the variable $x_{i_{j}}$
occurs two times in $\phi ^{\prime }$ as $\overline{x_{i_{j}}}$, then the
vertices $\beta _{x_{i_{j}}}$ and $\gamma _{x_{i_{j}}}$ of $\mathcal{G}%
\xspace_{x_{i_{j}}}$ will correspond to the negative assignment of $x_{i_{j}}
$. Otherwise, if $x_{i_{j}}$ occurs two times in $\phi ^{\prime }$ as a
positive literal, then the vertices $\beta _{x_{i_{j}}}$ and $\gamma
_{x_{i_{j}}}$ of $\mcalg_{x_{i_{j}}}$ will correspond to the
positive assignment of $x_{i_{j}}$. Again we will create one vertex $v_{k}$
for every clause $C_{k}$ of $\phi ^{\prime }$. If the vertex $v_{k}$
corresponds to an old clause (i.e.~from the initial formula $\phi $) then we
connect it to the $\gamma $-vertices of the subgraphs $\mcalg%
_{x_{i_{j}}}$ that correspond to these literals. If $v_{k}$ corresponds to a
new clause in $\phi ^{\prime }$ then this clause is of the form $(\overline{%
x_{i_{j}}}\vee x_{i_{j+1}})$. In this case we connect the corresponding $v$%
-vertex with the vertex $\beta _{x_{i_{j}}}$, if the variable $x_{i_{j}}$
occurs two times as a negative literal in $\phi ^{\prime }$, or with the
vertex $\alpha _{x_{i_{j}}}$, if $x_{i_{j}}$ occurs two times as a positive
literal in $\phi ^{\prime }$. Similarly, the $v$-vertex is connected with
the vertex $\alpha _{x_{i_{j+1}}}$, if the variable $x_{i_{j+1}}$ occurs two
times as a negative literal in $\phi ^{\prime }$, or with $\beta
_{x_{i_{j+1}}}$, if $x_{i_{j+1}}$ occurs two times as a positive literal in $%
\phi ^{\prime }$. The weights of these edges will be the same as before,
i.e.~each edge between a $v$-vertex and a $\beta $-vertex has weight 1 and
between a $v$-vertex and an $\alpha $-vertex or a $\gamma $-vertex has
weight~3.

In order to prove that the constructed graph is bipartite, it is sufficient
to prove that there is no cycle with odd length. Let $A$ be the set of all $%
\alpha $-vertices and of all $\beta $-vertices and let $\Gamma $ be the set
of all $\gamma $-vertices. First note that any cycle in the graph $G$ must
contain at least two vertices from $A\cup \Gamma $. Furthermore note that,
by the above construction, every path that connects two different vertices
of the set $A$, without touching any vertex of the set $\Gamma $, has even
length. Similarly, every path that connects two different vertices of the
set $\Gamma $, without touching any vertex of the set $A$, has also even
length. Thus every cycle in $G$ that does not contain any vertex from $%
\Gamma $ (resp.~from $A$) has even length. Consider now a cycle in $G$ that
contains vertices from both sets $A$ and $\Gamma $. Then, if we traverse
this cycle in any direction, we will encounter the same number of transition
edges from set $A$ to set $\Gamma $ and from the set $\Gamma $ to the set $A$%
. Therefore the length of the cycle is even, and thus $G$ is bipartite.
Thus, using the same argumentation as in Lemmas \ref{lem:one} and \ref%
{lem:at-least}, we obtain the following theorem.

\begin{theorem}
\label{thm:bip-hard} \greedy is strongly NP-hard,
even on \emph{bipartite} graphs with maximum degree at most 3 and with at
most three different integer weight values.
\end{theorem}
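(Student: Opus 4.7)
The plan is to leverage the reduction already built for Theorem~\ref{thm:inai} and merely re-route the clause-to-gadget edges so that no odd cycle can appear, while keeping the weight and degree bounds intact. The starting point will be an arbitrary 2-CNF formula $\phi$ (no bound on occurrences). I would first invoke the technique of Papadimitriou and Yannakakis~\cite{PY91}: for each variable $x_i$ that appears $\ell\ge 3$ times, replace it by fresh copies $x_{i_1},\ldots,x_{i_\ell}$ (one per old occurrence) and append the cyclic implication clauses $(\overline{x_{i_j}}\vee x_{i_{j+1}})$ for $j=1,\ldots,\ell$ (indices mod $\ell$). In any optimal assignment all copies take the same value, so MAX-2SAT reduces (preserving the gap) to an instance $\phi'$ in which each variable occurs in exactly three clauses, two with one polarity and one with the opposite polarity.

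Next I would rebuild the graph $G$ from $\phi'$ using the gadgets $\mcalg_{x_{i_j}}$ of Section~\ref{construction-subsec} with one tuning: the two vertices $\beta_{x_{i_j}}$ and $\gamma_{x_{i_j}}$ are assigned to the \emph{repeated} polarity of $x_{i_j}$ in $\phi'$ (so $\alpha_{x_{i_j}}$ carries the singleton polarity). For an ``old'' clause (inherited from $\phi$) I would attach its $v$-vertex to the $\gamma$-vertices of the two involved variable-copies by edges of weight~3. For a ``new'' clause $(\overline{x_{i_j}}\vee x_{i_{j+1}})$ I would attach its $v$-vertex to the gadget vertex of the appropriate polarity on each side, choosing $\beta$ if that polarity is the repeated one and $\alpha$ otherwise; edge weights remain $1$ at $\beta$-vertices and $3$ at $\alpha$- and $\gamma$-vertices. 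Degree and weight-set bounds are clearly preserved.

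The crux is verifying that $G$ is bipartite. I would set $A=\{\alpha_{*},\beta_{*}\}$ and $\Gamma=\{\gamma_{*}\}$ and show:
\begin{enumerate}
\item inside a single gadget $\mcalg_{x_{i_j}}$, the unique path between any two vertices of $A$ that avoids $\gamma_{x_{i_j}}$ has even length, and likewise the unique path between any two vertices of $\Gamma$ avoiding $A$ has even length (direct inspection of Fig.~\ref{fig:gx});
\item by construction, edges from $v$-vertices to gadgets are either $v$--$\gamma$ (old clauses) or $v$--$A$ (new clauses), never mixing $A$ and $\Gamma$ at the same $v$-vertex.
\end{enumerate}
Consequently, traversing any cycle of $G$ one encounters the same number of ``$A$-to-$\Gamma$'' and ``$\Gamma$-to-$A$'' transitions through $v$-vertices and gadget interiors; counting parities, every cycle has even length.

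Finally, I would observe that Lemmas~\ref{lem:one}, \ref{lem:one-of-ag}, and~\ref{lem:at-least} were purely local arguments about the four canonical matchings $\man,\mann,\map,\mbad$ of each $\mcalg_{x_i}$ and about how $v$-vertices can be matched through $\alpha$-, $\beta$- and $\gamma$-edges; the rerouting above does not affect these local analyses. Hence the equivalence ``$\phi'$ has an assignment satisfying at least $k$ clauses $\iff$ $G$ has a greedy matching of weight at least $14n'+k$'' still holds, where $n'$ is the number of variable-copies. Since MAX-2SAT remains strongly NP-hard under the PY transformation, Theorem~\ref{thm:bip-hard} follows. The main obstacle I anticipate is step~(1) above: I must check case-by-case that every choice of polarity assignment to $\{\alpha,\beta,\gamma\}$ yields the right parities for the internal gadget paths, so that the global bipartition argument goes through uniformly.
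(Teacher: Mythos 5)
Your proposal is correct and follows essentially the same route as the paper: the Papadimitriou--Yannakakis copy-and-cycle transformation to get a 2SAT(3) instance, the reassignment of $\beta$- and $\gamma$-vertices to the repeated polarity with old clauses wired to $\gamma$-vertices and new clauses to $\alpha$- or $\beta$-vertices, and the bipartiteness argument via the partition into $A=\{\alpha,\beta\text{-vertices}\}$ and $\Gamma=\{\gamma\text{-vertices}\}$ with a transition-counting parity argument on cycles. The parity check you flag as a remaining obstacle does go through (e.g.\ the internal $\beta$--$\alpha$ path has length 4 and all $v$-vertex detours within $A$ or within $\Gamma$ have length 2), and the paper handles it in exactly this way.
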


\subsection{Hardness results for \greedyv and \textsc{%
GreedyEdge}\xspace problems\label{hardness-additional-subsec}}

Having established the hardness results for \greedy %
in Sections~\ref{APX-subsec} and~\ref{hardness-bipartite-subsec}, we now
prove that also the decision problems \greedyv and 
\greedye are also strongly NP-hard.

\begin{theorem}
\label{thm:twohard} The decision problems \greedyv and 
\greedye are strongly NP-hard, even on
graphs with at most five different edge weights.
\end{theorem}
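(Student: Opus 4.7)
The plan is to reduce MAX2SAT(3) to both \greedyv and \greedye by extending the gadget construction of Section~\ref{construction-subsec}. Given an instance $(\phi,k)$ of MAX2SAT(3) with $m$ clauses and $n$ variables, I would first build the graph $G$ exactly as in Section~\ref{construction-subsec}; by Lemmas~\ref{lem:one} and~\ref{lem:at-least}, the maximum number of $v$-vertices that can be saturated by any greedy matching of $G$ equals $\mathrm{MAX2SAT}(\phi)$. The idea is then to append a small ``threshold gadget'' using two new weights $\epsilon_{1}<\epsilon_{2}<1$ that forces a distinguished vertex $u$ to be matched if and only if at least $k$ $v$-vertices are absorbed during the $G$-phase of the greedy procedure.

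Concretely, I would introduce one new vertex $u$ together with $m-k+1$ fresh vertices $z_{1},\ldots,z_{m-k+1}$, and add all edges $(u,z_{i})$ of weight $\epsilon_{1}$ and all edges $(z_{i},v_{j})$ of weight $\epsilon_{2}$. The resulting graph $G'$ uses only five distinct weights $\{4,3,1,\epsilon_{2},\epsilon_{1}\}$ (rescaled to polynomially bounded integers). Any greedy matching of $G'$ first exhausts the weight classes $\{4,3,1\}$ on the induced copy of $G$; let $s$ be the number of $v$-vertices still unmatched after this stage. Because the ``while''-loop of the Greedy Matching Procedure at weight $\epsilon_{2}$ cannot terminate as long as a $(z_{i},v_{j})$-edge is pickable, it always saturates exactly $\min(s,m-k+1)$ of the $z_{i}$'s. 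Consequently, at the final $\epsilon_{1}$-stage $u$ is matched iff some $z_{i}$ is still free, i.e.\ iff $s<m-k+1$, equivalently iff at least $k$ $v$-vertices were matched during the $G$-phase. Combined with Lemmas~\ref{lem:one} and~\ref{lem:at-least}, this yields: some greedy matching of $G'$ matches $u$ iff $\mathrm{MAX2SAT}(\phi)\ge k$, giving strong NP-hardness of \greedyv.

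For \greedye I would take the same graph $G'$ and use $e^{*}=(u,z_{1})$ as the specified edge. The extra ingredient is a symmetry argument: since the $z_{i}$'s are pairwise interchangeable (each adjacent to $u$ and to every $v_{j}$ with identical weights), whenever some $z_{i}$ can be left free after the $\epsilon_{2}$-stage, permuting the tie-breaking within that stage lets us leave $z_{1}$ specifically free, and then $e^{*}$ is chosen at the $\epsilon_{1}$-stage. Hence $e^{*}$ lies in some greedy matching iff $\mathrm{MAX2SAT}(\phi)\ge k$.

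The main obstacle is pinning down the $\epsilon_{2}$-phase analysis: one must show (i) when $s\ge m-k+1$ the ``while''-loop is forced to match every $z_{i}$ regardless of tie-breaking (so $u$ is definitively blocked), and (ii) when $s<m-k+1$ there exists a tie-breaking that leaves $z_{1}$ free. Both directions are essentially routine since the subgraph induced by $\{u,z_{1},\ldots,z_{m-k+1},v_{1},\ldots,v_{m}\}$ at weights $\epsilon_{1},\epsilon_{2}$ is a two-level bipartite star, but it must be argued carefully that the newly added low-weight edges never interfere with the original gadgets $\mcalg_{x_{i}}$ or affect the structural invariants used in Lemma~\ref{lem:one-of-ag}, so that the equivalence ``number of matched $v_{j}$'s $=$ number of satisfied clauses in the derived assignment'' is preserved intact.
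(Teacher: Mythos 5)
Your reduction is correct, but it takes a genuinely different route from the paper's. The paper first switches from MAX2SAT(3) to the pure decision problem 3SAT(3) and then attaches only two new vertices: a vertex $u$ joined to every $v$-vertex by an edge of weight $\frac{1}{2}$, and a pendant vertex $u^{*}$ joined to $u$ by an edge of weight $\frac{1}{4}$; the target is $u^{*}$ (resp.\ the edge $(u,u^{*})$), which is matched in some greedy matching if and only if $u$ has no free $v$-neighbour left, i.e.\ if and only if \emph{every} $v$-vertex is absorbed inside the gadgets, i.e.\ if and only if $\phi$ is satisfiable. You instead keep the threshold version of MAX2SAT(3) and build a counting gadget with $m-k+1$ interchangeable vertices $z_{1},\ldots,z_{m-k+1}$, using the fact that every maximal matching of a complete bipartite graph saturates the smaller side to conclude that $u$ can be matched exactly when at least $k$ of the $v$-vertices were absorbed. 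Both arguments rest on the same two pillars: the new edges are lighter than every edge of $G$, so the restriction of a greedy matching of $G'$ to $G$ is itself a greedy matching of $G$, and Lemmas~\ref{lem:one} and~\ref{lem:at-least} identify the number of matched $v$-vertices with the number of satisfiable clauses. What each approach buys: the paper's gadget is smaller and its correctness is immediate (no tie-breaking analysis is needed, since ``$u$ has a free neighbour'' is equivalent to ``some $v$-vertex is free''), whereas yours encodes an arbitrary threshold $k$ directly rather than requiring full satisfiability, at the price of the maximal-matching argument at the $\epsilon_{2}$-stage and the symmetry argument needed to force the specific edge $(u,z_{1})$ into the matching for \greedye. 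Both constructions use five weights and, after rescaling to constant integers, yield strong NP-hardness.
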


\begin{proof}
For the proof we amend the construction of Section~\ref{construction-subsec}
and the proofs of Section~\ref{APX-subsec}. Instead of reducing from the MAX2SAT(3) problem, 
we provide a reduction from the decision problem 3SAT(3). In this problem we are given a formula $\phi$, 
in which every clause has at most 3 literals and every variable appears in at most 3 clauses, 
and the question is whether there exists a truth assignment that satisfies \emph{all clauses} of $\phi$. 
This problem is NP-hard~\cite{Ausiello1999}.

Let $\phi $ be a 3SAT(3) formula with $n$ variables and $m$ clauses. We construct from $\phi$ a weighted graph $G$ in the same way as in Section~\ref{construction-subsec}, 
with the only difference that now every $v$-vertex is connected to at most three (instead of at most two) distinguished vertices from the subgraphs $\mcalg_{x_i}$. 
By following exactly the same proofs of Lemmas~\ref{lem:one} and~\ref{lem:at-least}, we can prove that this graph $G$ has a greedy matching with weight at least $14n+k$ if and only if 
there exists a truth assignment that satisfies at least $k$ clauses of the 3SAT(3) formula $\phi$. 
Now we augment this graph $G$ to a new graph $G^{\prime }$ by adding two new vertices $u$ and $u^{\ast }$.
Vertex $u$ is adjacent in $G^{\prime }$ to all the $v$-vertices of $G$ with
edges of weight $\frac{1}{2}$, while vertex $u^{\ast }$ is adjacent in $%
G^{\prime }$ only to vertex $u$ with an edge of weight $\frac{1}{4}$. Note
that $G^{\prime }$ has five different edge weights.

Let $\mathcal{M}^{\prime }$ be a greedy matching in $G^{\prime }$ and let $%
\mathcal{M}$ be the restriction of $\mathcal{M}^{\prime }$ on the edges of
the graph $G$. Since every edge of $G$ has larger weight than every edge
that is adjacent to $u$ or to $u^*$ in $G^{\prime }$, it follows that $%
\mathcal{M}$ is also a greedy matching of $G$. Assume that each of the $m$ $v
$-vertices is matched in $\mathcal{M}$ with a vertex from the subgraphs $%
\mcalg_{x_i}$. Then clearly $(u,u^*) \in \mathcal{M}^{\prime }$%
. Conversely, assume that $(u,u^*) \in \mathcal{M}^{\prime }$. If there
exists at least one vertex $v_j$ that is not matched in $\mathcal{M}$ with
any vertex from a subgraph $\mcalg_{x_i}$, then the edge $(v_j,
u)$ with weight $\frac{1}{2}$ will be available to be matched in $\mathcal{M}%
^{\prime }$, and thus the edge $(u, u^*)$ with weight $\frac{1}{4}$ will not
belong to $\mathcal{M}^{\prime }$, a contradiction. Therefore, $(u,u^*) \in 
\mathcal{M}^{\prime }$ if and only if each of the $m$ $v$-vertices is
matched in $\mathcal{M}$ a vertex from the subgraphs $\mcalg%
_{x_i}$. Furthermore, it follows by the proofs of Lemmas~\ref{lem:one} and~%
\ref{lem:at-least} that each of the $m$ $v$-vertices is matched in $\mathcal{%
M}$ with a vertex from the subgraphs $\mcalg_{x_i}$ if and only if
the weight of the greedy matching $\mathcal{M}$ of $G$ is at least $14n+m$,
or equivalently, if and only if there exists a truth assignment that
satisfies all $m$ clauses of the formula $\phi$.

Summarizing, there exists a greedy matching $\mathcal{M}^{\prime }$ of the
graph $G^{\prime }$, in which the given edge $(u,u^*)$ (resp.~the given
vertex $u^*$) is matched, if and only if the formula $\phi$ is satisfiable. 
Thus, since 3SAT(3) is NP-hard, it follows that both decision problems \greedyv and \greedye 
are strongly NP-hard, even on graphs with at most five different edge weights.
\end{proof}

\section{Further natural parameters of \textsc{GreedyMatching}}

In this section we investigate the influence of two further natural parameters to the computational complexity of \greedy, 
other than the parameters maximum degree and number of different edge weights that we considered in Section~\ref{sec:hardness}. 
As the first parameter we consider in Section~\ref{ratio-subsec} the minimum ratio $\lambda_{0}$ 
between two consecutive weight values, and as the second parameter we consider in Section~\ref{sec:param2} 
the maximum cardinality $\mu$ of the connected components of $G(w_i)$, over all possible weight values $w_i$. 
We prove that \greedy has a \emph{sharp threshold} behavior with respect to each of these parameters $\lambda_{0}$ and $\mu$.

\subsection{Minimum ratio of consecutive weights\label{ratio-subsec}}

Here we consider the parameter $\lambda_0 = \min_{i} \lambda_i$, 
where $\lambda_i = \frac{w_i}{w_{i+1}}>1$ is the ratio between the $i$th pair of consecutive edge weights. 
First we prove that, if $\lambda_0 \geq 2$, then there exists at least one 
maximum weight matching of $G$ that is an optimum solution for \greedy on 
$G$, obtaining the next theorem.

\begin{theorem}
\label{thm:rg2} \greedy can be computed in polynomial time if $\lambda_0 \geq 2$.
\end{theorem}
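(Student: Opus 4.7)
The plan is to reduce \greedy under the assumption $\lambda_0 \geq 2$ to the ordinary maximum weight matching problem. I would compute an arbitrary maximum weight matching $M^\ast$ of $G$ in polynomial time using, e.g., Edmonds's blossom algorithm, and then iteratively ``greedify'' $M^\ast$ through weight-preserving local swaps. Since every greedy matching is a matching, the weight of $M^\ast$ is an upper bound on $\opt(G)$, so exhibiting a greedy matching of weight $w(M^\ast)$ establishes the theorem.

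The greedification phase processes the weight classes $w_1 > w_2 > \ldots > w_\ell$ from largest to smallest. At weight $w_i$ I repeat the following swap while possible: if there exists an edge $e = (u,v) \in G(w_i)$ such that neither $u$ nor $v$ is incident in $M^\ast$ to an edge of weight $\geq w_i$, let $e_u, e_v$ denote the (at most two) $M^\ast$-edges incident to $u$ and $v$ (if any), and update $M^\ast \leftarrow (M^\ast \setminus \{e_u,e_v\}) \cup \{e\}$. The update is a valid matching because $e_u$ and $e_v$ are the only $M^\ast$-edges at $u$ and $v$. Since $u$ and $v$ are free of $M^\ast$-edges of weight $\geq w_i$, every removed edge has weight at most $w_{i+1}$, so the assumption $\lambda_0 \geq 2$ gives $w_i - w(e_u) - w(e_v) \geq w_i - 2 w_{i+1} \geq 0$; combined with the maximality of $w(M^\ast)$ this forces the change to be exactly $0$, so each swap preserves $w(M^\ast)$. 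Moreover, each swap strictly increases $|M^\ast \cap G(w_i)|$, a quantity bounded by $|V|/2$, so the whole procedure runs in polynomial time.

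When the procedure terminates, $M^\ast \cap G(w_i)$ is, for every $i$, a maximal matching of the subgraph of $G(w_i)$ induced on the vertices that are free of $M^\ast$-edges of weight greater than $w_i$. This is precisely the condition under which $M^\ast$ can be produced by the Greedy Matching Procedure: first pick the edges of $M^\ast \cap G(w_1)$ in any order, then those of $M^\ast \cap G(w_2)$, and so on. Hence $M^\ast$ is a greedy matching whose weight equals that of a maximum weight matching of $G$, so $M^\ast$ is optimal for \greedy. The subtle point — and the main thing to verify carefully — is that swaps performed at level $i$ only remove edges of weight strictly smaller than $w_i$, and therefore cannot destroy the maximality invariants already secured at higher levels; this is immediate from the selection rule for $(u,v)$. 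The threshold $\lambda_0 = 2$ in the argument is tight, consistently with the companion hardness result for $\lambda_0 < 2$.
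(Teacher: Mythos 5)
Your proposal is correct and follows essentially the same route as the paper: compute a maximum weight matching and then greedify it by swaps, using $\lambda_0\geq 2$ together with optimality to force $w(e)=w(e')+w(e'')$ so that each swap is weight-preserving. Your level-by-level processing with the potential $|M^\ast\cap G(w_i)|$ is a slightly cleaner bookkeeping of the paper's rule of always swapping a maximum-weight ``problematic'' edge, but it is the same argument.
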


\begin{proof}
Let \mcal be a maximum weight matching for $G$. Note that \mcal can be computed 
in polynomial time~\cite{Duan14}. Assume
that \mcal is not a greedy matching. We will construct from \mcal a greedy 
matching of $G$ which has the same weight as \mcal, as follows. Since \mcal is 
not greedy, there must exist at least one edge $e \notin \mcal$ and two
incident edges $e^{\prime }, e^{\prime \prime }\in \mcal$,
where each of the weights $w(e^{\prime })$ and $w(e^{\prime \prime })$ of
the edges $e^{\prime }$ and $e^{\prime \prime }$, respectively, is smaller
than the weight $w(e)$ of the edge $e$. Since $w(e^{\prime }),w(e^{\prime
\prime })<w(e)$ and $\lambda_0 \geq 2$, it follows that $w(e^{\prime
}),w(e^{\prime \prime }) \leq \frac{w(e)}{2}$, and thus $w(e) \geq
w(e^{\prime }) + w(e^{\prime \prime })$. On the other hand $w(e) \leq
w(e^{\prime }) + w(e^{\prime \prime })$, since \mcal is a
maximum weight matching by assumption. Therefore $w(e) = w(e^{\prime }) +
w(e^{\prime \prime })$, and thus we can replace in \mcal the
edges $e^{\prime },e^{\prime \prime }$ with the edge $e$ without changing
the weight of \mcal.

We call all such edges $e\notin \mcal$ ``problematic''. Among
all problematic edges pick one edge $e$ with the maximum weight and replace
its incident matched edges $e^{\prime },e^{\prime \prime }$ with the edge $e$
in \mcal. We repeat this procedure until no problematic edge
is left, and thus we obtain a greedy matching $\mathcal{M}^{\prime }$ with
equal weight as \mcal. 
At each iteration the choice of the maximum weight problematic edge ensures
that no new problematic edges are created. We perform at most $|\mathcal{M} %
\xspace|/2 = |E|/4$ iterations, and thus $\mathcal{M}^{\prime }$ is computed
in polynomial time.
\end{proof}

\medskip

Recall that in the proof of the Theorem~\ref{thm:inai} the weight values 1,
3, and 4 were used, thus the \greedy is hard for $%
\lambda_0 \leq 4/3$. In the next theorem we amplify this result by showing
that \greedy is NP-hard for any
constant $\lambda_0 <2$. That is, complexity of \greedy has a threshold behavior at the parameter value $%
\lambda_0 = 2$.

\begin{theorem}
\label{thm:parlb} \greedy is \emph{strongly NP-hard} and \emph{APX-complete} 
for any constant $\lambda_0 < 2$, even on graphs with maximum degree at most 3 and with at most three different integer weight values.
\end{theorem}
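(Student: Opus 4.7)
The plan is to mimic the reduction from MAX2SAT(3) used in Theorem~\ref{thm:inai}, but replace the three specific edge weights $1, 3, 4$ of the gadget $\mcalg_{x_i}$ with a parametric triple $(a, b, c)$ of positive integers satisfying $a < b < c$ and $\min\{b/a,\ c/b\} \geq \lambda^*$ for the given threshold $\lambda^* < 2$. Such a triple always exists: the family $a = N$, $b = 2N-1$, $c = 4N-3$ yields $b/a = 2 - 1/N$ and $c/b = 2 - 1/(2N-1)$, both approaching $2$ from below, so picking $N$ large enough as a function of $\lambda^*$ delivers integer weights whose minimum consecutive ratio is at least $\lambda^*$ while remaining strictly below $2$. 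The construction uses exactly three distinct integer weights and preserves maximum degree~$3$, matching the statement of the theorem.

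The crucial observation is that the Greedy Matching Procedure only depends on the \emph{relative ordering} of edge weights, not on their numerical values. Since replacing $(1,3,4)$ by $(a,b,c)$ preserves this ordering, the weight-$c$ subgraph inside each gadget is still the same path of five edges and therefore still has exactly four maximal matchings, corresponding bijectively to the four greedy alternatives $\man$, $\mann$, $\map$, $\mbad$ of Figures~\ref{fig:man}--\ref{fig:bad2}. In particular, Lemma~\ref{lem:one-of-ag} carries over verbatim: its proof only uses that $(\alpha_{x_i},\gamma_{x_i})$ is a top-weight edge, which forces it into any maximal matching of the weight-$c$ subgraph whenever both $(\alpha_{x_i}, r_{x_i})$ and $(\gamma_{x_i},y_{x_i})$ are absent. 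The edges joining $v$-vertices to $\beta$-vertices get weight $a$, and the edges joining $v$-vertices to $\alpha$- or $\gamma$-vertices get weight $b$, keeping the intended priority structure intact.

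A direct bookkeeping check shows that with the new weights, each gadget contributes base weight $2b+2c$ (from the matching $\man$), and every time a $v$-vertex is matched to a distinguished vertex the total weight strictly increases by exactly $a$: for instance, the swap $\man \to \map$ removes edges of weights $b+c$ and adds edges of weights $a+c+b$, for a net gain of $a$, and the analogous $\man \to \mann$ swap and the direct use of a $(v_j,\beta_{x_i})$ edge each contribute the same increment $a$. Consequently the statements and proofs of Lemmas~\ref{lem:one} and~\ref{lem:at-least} translate into the single equivalence: there is a greedy matching of weight at least $(2b+2c)n + k\cdot a$ in the modified graph if and only if there is a truth assignment satisfying at least $k$ clauses of~$\phi$. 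This gives strong NP-hardness for every fixed $\lambda^* < 2$. The APX-completeness argument of Theorem~\ref{thm:inai} transfers unchanged modulo constants: assuming a PTAS for \greedy yields assignments satisfying at least $(1-\varepsilon)\,\opt_{\text{Max2SAT(3)}}(\phi) - \varepsilon\,(2b+2c)n/a$ clauses, and since $a,b,c$ are absolute constants depending only on $\lambda^*$ and since $\opt_{\text{Max2SAT(3)}}(\phi)\geq \tfrac{7}{16}n$, the error term is $O(\varepsilon)\cdot\opt_{\text{Max2SAT(3)}}(\phi)$, contradicting the APX-hardness of MAX2SAT(3).

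The main (albeit modest) obstacle is parameter design: one must simultaneously keep the weights integral, limit the instance to three distinct values, and force both consecutive ratios arbitrarily close to~$2$; the family $(N, 2N-1, 4N-3)$ resolves all three requirements cleanly. The secondary technical task is to verify that every strict inequality invoked in the original proofs (in particular the ones that forbid $(\alpha_{x_i},\gamma_{x_i})$ from being skipped by greedy, and those that account for the per-clause gain) relies only on the strict ordering $a < b < c$ and on positivity of $a$, never on the specific numerical values $1, 3, 4$; this verification is routine but should be carried out explicitly to rule out any accidental dependence on the old weights.
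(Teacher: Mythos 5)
Your proposal is correct and follows essentially the same route as the paper: reuse the MAX2SAT(3) gadget reduction verbatim, observing that all greedy/counting arguments depend only on the ordering $a<b<c$, and choose an integer weight triple whose minimum consecutive ratio tends to $2$ from below. The only (immaterial) difference is the parametric family — the paper keeps the smallest weight at $1$ and uses $(1,\,x+1,\,2x)$, yielding the equivalence at weight $(6x+2)n+k$, whereas you scale all three weights to $(N,\,2N-1,\,4N-3)$.
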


\begin{proof}
For the proof we amend the weight values in the construction of Section~\ref{construction-subsec} 
and the proofs of Section~\ref{APX-subsec}. 
More specifically, in the construction of the graph $G$ from the formula $\phi$ 
in Section~\ref{construction-subsec}, we replace each edge of weight 4 with
an edge of weight $2x$, and each edge of weight 3 with an edge of weight $x+1$, 
where $x>1$ is an arbitrary integer. In particular, the results of
Sections~\ref{construction-subsec} and~\ref{APX-subsec} are given for the value $x=2$. 
By the proofs of Lemmas~\ref{lem:one} and~\ref{lem:at-least} 
(adapted for these new weights) it follows that there exists
a truth assignment that satisfies at least $k$ clauses of the 2SAT(3)
formula $\phi$ if and only if there is a greedy matching with weight at least $(6x+2)n+k$ in the constructed graph $G$. 
Similarly to Sections~\ref{construction-subsec} and~\ref{APX-subsec}, 
this graph $G$ maximum maximum degree 3 and three different integer weight values. 
Furthermore, $\lambda _{0}=\frac{2x}{x+1}$ can go arbitrarily close to 2 as $x$ increases. 
The statement of the theorem follows exactly by the proof of Theorem~\ref{thm:inai}, adapted for these new weights of the edges of $G$.
\end{proof}

\subsection{Maximum edge cardinality of a connected component in $G(w_{i})$\label{sec:param2}}

Another parameter that we can consider is the maximum edge cardinality $\mu$ of
the connected components of $G(w_{i})$, among all different weights $w_{i}$. 
Since $\mu =1$ implies that there is a unique greedy matching for 
$G$ which can be clearly computed in polynomial time, we consider the case $\mu \geq 2$.
In the original construction of Section~\ref{construction-subsec}, in every
gadget $\mathcal{G}_{x_{i}}$ there is a path with five edges where each edge
has weight 4. Thus $\mu =5$ in the graph $G$ of Section~\ref{construction-subsec}.
To prove our hardness result for $\mu =2$ in Theorem \ref%
{thm-maximum-component-parameter}, we modify the gadgets $\mathcal{G}_{x_{i}}$ 
as illustrated in Figure \ref{gadget-cardinality-parameter-fig}.
\begin{figure}[h]
\label{fig:modgx}
\par
\begin{center}
\begin{tikzpicture}[scale=1, auto,swap]

\node[draw,circle,inner sep=2pt] (01) at (0,1) {$\beta_{x_i}$};
\node[draw,circle,inner sep=2pt,fill] (10) at (1,0) [label=below:$p_{x_i}$]{};
\path[edge] (01) -- node[weight] {$2$} (10);
\node[draw,circle,inner sep=2pt,fill] (21) at (2,1) [label=above:$q_{x_i}$]{};
\path[edge] (21) -- node[weight] {$4$} (10);
\node[draw,circle,inner sep=2pt,fill] (30) at (3,0) [label=below:$r_{x_i}$]{};
\path[edge] (21) -- node[weight] {$5$} (30);
\node[draw,circle,inner sep=2pt] (41) at (4,1) {$\alpha_{x_i}$};
\path[edge] (41) -- node[weight] {$5$} (30);
\node[draw,circle,inner sep=2pt] (51) at (6,1) {$\gamma_{x_i}$};
\path[edge] (41) -- node[weight] {$4$} (51);
\node[draw,circle,inner sep=2pt,fill] (70) at (7,0) [label=below:$y_{x_i}$]{};
\path[edge] (51) -- node[weight] {$5$} (70);
\node[draw,circle,inner sep=2pt,fill] (81) at (8,1) [label=above:$z_{x_i}$]{};
\path[edge] (81) -- node[weight] {$5$} (70);
\node[draw,circle,inner sep=2pt,fill] (90) at (9,0) [label=below:$s_{x_i}$]{};
\path[edge] (81) -- node[weight] {$4$} (90);
\node[draw,circle,inner sep=2pt,fill] (101) at (10,1) [label=above:$t_{x_i}$]{};
\path[edge] (101) -- node[weight] {$2$} (90);
\end{tikzpicture}
\end{center}
\par
\caption{The modified weights for the gadget $\mcalg_{x_{i}}$.}
\label{gadget-cardinality-parameter-fig}
\end{figure}

Notice that in every subgraph $\mathcal{G}_{x_{i}}$ (see Figure~\ref%
{gadget-cardinality-parameter-fig}) the connected components of each weight
have edge cardinality at most 2. Furthermore, the weight of the edge between
a $v$-vertex and a $\beta$-vertex has weight 1, while the edges between 
a $v$-vertex and an $\alpha$-vertex or a $\gamma$-vertex have weight~3. 
Thus these edges do not belong to any connected component with edges from 
$\mathcal{G}_{x_{i}}$. However, each $v$-vertex is connected with at most
two distinguished vertices in different gadgets $\mathcal{G}_{x_{i}}$ and $%
\mathcal{G}_{x_{i^{\prime }}}$. Therefore $\mu =2$ in the graph $G$ of this
modified construction. Considering these updated gadgets $\mathcal{G}_{x_{i}}
$ and using the same argumentation as in Lemmas~\ref{lem:one} and~\ref%
{lem:at-least}, we obtain that there is a greedy matching with weight at
least $18n+k$ in the constructed graph $G$ if and only if there is a truth
assignment that satisfies at least $k$ clauses from the original 2SAT(3)
formula $\phi $, which implies the next theorem.

\begin{theorem}
\label{thm-maximum-component-parameter} 
\greedy is \emph{strongly NP-hard} and \emph{APX-complete} for ${\mu \geq 2}$,
even on graphs with maximum degree at most 3 and with at most five different 
integer weight values.
\end{theorem}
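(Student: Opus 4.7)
The plan is to adapt the construction of Section~\ref{construction-subsec} and the analysis of Section~\ref{APX-subsec} by modifying only the edge weights inside each gadget $\mcalg_{x_i}$, as shown in Figure~\ref{gadget-cardinality-parameter-fig}, and leaving the $v$-edge weights ($1$ for edges to $\beta$-vertices and $3$ for edges to $\alpha$- or $\gamma$-vertices) unchanged. The new weight profile $(2,4,5,5,4,5,5,4,2)$ is chosen so that every weight class induces connected components of at most two edges, while the role and relative weight differences of the four canonical matchings $\man,\mann,\map,\mbad$ are preserved.

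The first step is to verify $\mu=2$. In Figure~\ref{gadget-cardinality-parameter-fig}, the weight-$5$ edges of each gadget form two disjoint paths of length two, whereas the weight-$4$ and weight-$2$ edges inside each gadget are isolated. Since every $v$-vertex has degree at most $2$ and carries only weight-$1$ and weight-$3$ edges, while every distinguished vertex is incident to at most one $v$-edge, the connected components of $G(1)$ and $G(3)$ are also paths of at most two edges. Thus $\mu=2$, with exactly five distinct integer weights $\{1,2,3,4,5\}$ and maximum degree $3$.

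The second step is to re-derive the gadget analysis. Enumerating the greedy rounds (weights $5,4,3,2,1$ in turn) shows that within every gadget the resulting matching still falls into one of the four configurations $\man,\mann,\map,\mbad$ of Figures~\ref{fig:man}--\ref{fig:bad2}, with new total weights $w(\man)=w(\mbad)=18$ and $w(\map)=w(\mann)=16$; the crucial Lemma~\ref{lem:one-of-ag} transfers verbatim because $(\alpha_{x_i},\gamma_{x_i})$ still has weight $4$, strictly greater than the weight $3$ of its competing $v$-edges. Relative to the base $\man$, matching $\beta_{x_i}$ to a $v$-vertex gains weight $+1$ (since $\beta_{x_i}$ is already free in $\man$), while routing a $v$-vertex through $\alpha_{x_i}$ requires switching $\man\to\map$ (loss $18-16=2$) together with a weight-$3$ $v$-edge, for a net gain of $+1$; the situation for $\gamma_{x_i}$ via $\man\to\mann$ is symmetric. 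Hence the ``$+1$ per matched $v$-vertex'' accounting that drove the original Lemmas~\ref{lem:one} and~\ref{lem:at-least} is recovered, now with base $18n$ instead of $14n$.

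The equivalence ``$G$ has a greedy matching of weight at least $18n+k$ if and only if $\phi$ has a truth assignment satisfying at least $k$ clauses'' then follows by the same arguments as in Lemmas~\ref{lem:one} and~\ref{lem:at-least}, and APX-completeness follows by replaying the PTAS reduction at the end of the proof of Theorem~\ref{thm:inai} with constants rescaled for the new base $18n$. The main pitfall is purely bookkeeping: one must verify that no ``new'' greedy configuration inside the modified gadget gives it a contribution exceeding $18+n_i$, where $n_i\in\{0,1,2\}$ denotes the number of $v$-vertices matched to its distinguished vertices; otherwise the implication ``weight ${\geq} 18n+k$ forces $\geq k$ matched $v$-vertices'' would break. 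A direct case analysis over the four configurations, combined with Lemma~\ref{lem:one-of-ag}, secures this invariant and closes the proof.
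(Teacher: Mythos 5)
Your proposal is correct and follows essentially the same route as the paper: the identical reweighted gadget $(2,4,5,5,4,5,5,4,2)$ with unchanged $v$-edge weights, the observation that every weight class now induces components of at most two edges, and the transfer of Lemmas~\ref{lem:one} and~\ref{lem:at-least} with the base shifted from $14n$ to $18n$. Your explicit recomputation of the four configuration weights ($18,16,16,18$) and of the ``$+1$ per matched $v$-vertex'' invariant is a detail the paper leaves implicit, but it matches.
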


\section{A randomized approximation algorithm\label{sec:approx}} 

In this section we provide a randomized approximation algorithm (\rgma) for \greedy with approximation ratio $\frac{2}{3}$ on two special classes of graphs (cf.~Section~\ref{two-bush-subsec}). 
Furthermore we highlight an unexpected relation between \rgma and the randomized \mrg algorithm for greedily approximating the maximum cardinality matching (cf.~Section~\ref{bush-cardinality}), 
the exact approximation ratio of which is a long-standing open problem~\cite{PS12,ADFS95,DF91}. 
Before we present our randomized algorithm \rgma, we first introduce the following class of weighted graphs, called \emph{bush graphs}.

\begin{definition}[Bush graph]
\label{def:bushg}
An \emph{edge-weighted} graph $G = (V, E)$ with $\ell$ edge weight values $w_1 > w_2 > \ldots > w_{\ell}$ 
is a \emph{bush graph} if, for every $i\in\{1,2,\ldots,\ell\}$, the edges of $G(w_{i})$ form a \emph{star}, 
which we call the \emph{$i$-th bush} of $G$.
\end{definition}

\begin{tcolorbox}[title=\rgma]
\textbf{Input:} Bush Graph $G$ with edge weight values $w_1 > w_2 > \ldots > w_{\ell}$.\\
\textbf{Output:} A greedy matching \mrgma.\vspace{-2mm}
\begin{enumerate}
\item $\mrgma \leftarrow \emptyset$
\vspace{-2mm}
\item \textbf{for} $i = 1 \ldots \ell$ \textbf{do}
\vspace{-2mm}
\item \hspace*{2mm} \textbf{if} $G_i \neq \emptyset$
\vspace{-2mm}
\item \hspace*{5mm} Select uniformly at random an edge $e_i \in G_i$ and add $e_i$ to \mrgma
\vspace{-2mm}
\item \hspace*{5mm} Remove from $G$ the endpoints of $e_i$ and all edges of $G_{i}$
\end{enumerate}
\end{tcolorbox}

\subsection{Bush graphs and the maximum cardinality matching\label{bush-cardinality}}

In this section we present the connection of the problem \greedy on (weighted) bush graphs 
to the problem of approximating the maximum cardinality matching in unweighted graphs via randomized greedy algorithms, cf.~Theorem~\ref{thm:cardappx}. 
Notice that we cannot directly apply the \rgma algorithm on unweighted graphs, 
since the algorithm has to consider the different bushes in a specific total 
order which is imposed by the order of the weights. 
Thus, in order to approximate a maximum cardinality matching in a given unweighted graph $G$ using the \rgma algorithm, 
we first appropriately convert $G$ to a (weighted) bush graph $G^*$ using the next Bush Decomposition algorithm, and then we apply \rgma on $G^*$.

\medskip

\begin{tcolorbox}[title= Bush Decomposition]
\textbf{Input:} Unweighted graph $G=(V,E)$ and $\eps \ll \frac{1}{|V|^3}$.\\
\textbf{Output:} A (weighted) bush graph $G^*$.\vspace{-2mm}
\begin{enumerate}
\item Set $k \leftarrow 0$
\vspace{-2mm}
\item \textbf{while} $E \neq \emptyset$ \textbf{do}
\vspace{-2mm}
\item \hspace*{2mm} Chose a random vertex $u \in V$ \label{gws3}
\vspace{-2mm}
\item \hspace*{2mm} For every $v' \in S := \{v' \in V: (u, v') \in E\}$ set $w(u, v') = 1 - k\cdot \eps$
\vspace{-2mm}
\item \hspace*{2mm} Remove the edges of $S$ from $E$
\vspace{-2mm}
\item \hspace*{2mm} $k \leftarrow k+1$
\end{enumerate}
\end{tcolorbox}

\medskip

Any unweighted graph $G = (V,E)$ can be considered as a weighted graph with edge weights $w(u, v) = 1$ for every edge $(u, v) \in E$, 
and thus in this case $\opt(G)$ coincides with the maximum cardinality matching in $G$.
In the next lemma we relate $\opt(G^*)$ with $\opt(G)$.

\begin{lemma}
\label{lem:cardopt}
$\opt(G) \geq \opt(G^*) \geq \opt(G) - \frac{1}{n}$.
\end{lemma}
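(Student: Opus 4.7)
The plan is to establish the two inequalities separately. The upper bound $\opt(G)\geq \opt(G^*)$ is immediate: every greedy matching $\mcal^*$ of $G^*$ is also a matching of $G$ (the two graphs share the same edge set), and since Bush Decomposition assigns every weight at most~$1$, $w(\mcal^*)\leq |\mcal^*|\leq \opt(G)$, using that $\opt(G)$ coincides with the maximum cardinality matching of the unweighted graph $G$.

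For the lower bound $\opt(G^*)\geq \opt(G)-1/n$, the plan is to exhibit a greedy matching $\mcal^*$ of $G^*$ whose cardinality attains the maximum matching number of $G$, namely $|\mcal^*|=\opt(G)$. Granting this, every edge of $G^*$ carries weight at least $1-(K-1)\eps\geq 1-n\eps$, where $K\leq n$ is the number of bushes produced by the decomposition (each centered at a distinct vertex of $G$); therefore $w(\mcal^*)\geq \opt(G)\,(1-n\eps)$, and combining with $\opt(G)\leq n/2$ and $\eps\ll 1/n^3$ gives $w(\mcal^*)\geq \opt(G)-\tfrac{1}{2n} > \opt(G)-\tfrac{1}{n}$.

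To construct such an $\mcal^*$, I plan to prove by induction on the number of non-empty bushes the stronger statement that every bush graph $H$ admits a greedy matching whose cardinality equals the maximum matching number of the underlying unweighted graph. For the inductive step, let $B_1$ denote the highest-weight bush of $H$, with center $c_1$. If $c_1$ belongs to some maximum matching $M$ of the underlying graph, with the matched edge $(c_1,v)\in M$ (which is automatically a leaf edge of $B_1$), then I select $(c_1,v)$ as the first greedy pick and recurse on $H-c_1-v$, whose underlying graph has maximum matching exactly one smaller. Otherwise $c_1$ is exposed by every maximum matching of the underlying graph, in which case I pick an arbitrary leaf $v$ of $B_1$ (which is unmatched at this initial step) and again recurse on $H-c_1-v$.

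The main difficulty will be the second case: verifying that the maximum matching number of $G-c_1-v$ equals $\opt(G)-1$ regardless of which leaf $v$ is chosen, so that the induction closes. The key observation is that since $c_1$ is exposed by every maximum matching but $(c_1,v)$ is an edge, $v$ must itself be matched in every maximum matching (otherwise $(c_1,v)$ could be appended to produce a strictly larger matching, contradicting maximality); hence removing the edge incident to $v$ from any maximum matching yields a matching of $G-c_1-v$ of size $\opt(G)-1$, giving the $\geq$ direction. The reverse inequality $\opt(G-c_1-v)\leq \opt(G)-1$ follows by appending $(c_1,v)$ to any maximum matching of $G-c_1-v$ to obtain a matching of $G$. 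This closes the induction and completes the proof.
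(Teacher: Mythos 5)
Your overall strategy is sound and close in spirit to the paper's: both arguments hinge on exhibiting a greedy matching of $G^{*}$ whose \emph{cardinality} equals the maximum matching number of $G$, after which the weight loss is bounded by the total $\eps$-perturbation. Your upper bound is fine (and cleaner than the paper's argument by contradiction). However, the induction hypothesis you use for the lower bound --- ``every bush graph admits a greedy matching whose cardinality equals the maximum matching number of the underlying unweighted graph'' --- is false as stated. Take the path $c-a-b-d$ and let $\{(a,b)\}$, $\{(c,a)\}$, $\{(b,d)\}$ be three bushes of strictly decreasing weight: this is a legitimate bush graph, every greedy matching is forced to start with $(a,b)$ and therefore has cardinality $1$, while the maximum matching number is $2$. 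The step you flag parenthetically --- that a maximum-matching edge covering $c_{1}$ is ``automatically a leaf edge of $B_{1}$'' --- only holds when \emph{every} edge of $H$ incident to $c_{1}$ lies in $B_{1}$. This is true of the graph $G^{*}$ produced by Bush Decomposition, but it is not part of the class of graphs you induct over, so the inductive step does not close as written.

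The gap is fixable: the decomposition guarantees that every edge incident to the center $c_{j}$ of the $j$-th bush has weight at least $w_{j}$, and this invariant is inherited by $H-c_{1}-v$ (once the bushes above the highest surviving non-empty one are emptied, all remaining edges at its center lie inside it); strengthening the induction hypothesis to bush graphs satisfying this invariant, and always peeling at the original decomposition center, makes your argument go through. Two smaller remarks. First, your Case 2 is vacuous: if $c_{1}$ has any neighbor $v$ and $M$ is a maximum matching exposing $c_{1}$, then $v$ is matched, say by $(v,w)$, and $M\setminus\{(v,w)\}\cup\{(c_{1},v)\}$ is a maximum matching covering $c_{1}$; so Case 1 always applies (your reasoning inside Case 2 is not wrong, merely never invoked --- indeed it derives a conclusion contradicting its own hypothesis). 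Second, the paper reaches the same cardinality-preserving greedy matching by locally swapping edges of a fixed maximum matching while scanning the bush centers in decreasing weight order, rather than by induction; your peeling argument is arguably more transparent, but only once the invariant above is made explicit.
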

\begin{proof}
Assume that $\opt(G^*) \geq \opt(G)$. Then, since the weight of each edge of $G$ is 1 and the weight of each 
edge of $G^*$ is by construction smaller than 1, it follows that $\opt(G^*)$ has strictly more edges than 
$\opt(G)$. This is a contradiction, since $\opt(G)$ is a maximum cardinality matching of $G$. 
Therefore $\opt(G) \geq \opt(G^*)$.

To prove that $\opt(G^*) \geq \opt(G) - \frac{1}{n}$, we construct 
from a maximum cardinality matching \mcal of $G$ 
a maximum weight greedy matching $\mcal^*$ for $G^*$ with the same cardinality as \mcal, 
i.e.~$|\mcal^*|=|\mcal|$, as follows. 
Starting from $\mcal$, we sequentially visit all centers $x_1, x_2, \ldots$ 
of the bushes in the weighted graph $G^*$, in a decreasing order of their edge weights.
Whenever a center $x_{i}$ of a bush in $G^*$ is unmatched in $\mcal$, 
then all its neighbors must be matched. If one of these neighbors of $x_i$ 
is matched in the current matching with an edge that is lighter than the edges of the bush of $x_i$, 
then we swap one of these edges with an edge incident to $x_{i}$. 
That is, the only case where $x_i$ stays unmatched is when all neighbors of $x_i$ are matched 
with edges of larger weight in the current matching. In this case there exists a maximum
cardinality matching for $G$ such that the vertex $x_i$ is unmatched.
At the end we obtain a matching $\mcal^*$ with the same cardinality
as the initial matching $\mcal$, but now $\mcal^*$ is a greedy matching for 
$G^*$. Thus, since $|\mcal|=|\mcal^*|$ and the weight of $\mcal^*$ is 
$w(\mcal^*)\leq OPT(G^*)\leq OPT(G)=|\mcal|$, it follows that $|\mcal|-w(\mcal^*)$
is less than or equal to the sum of the weight differences that have been 
introduced by ``Bush Decomposition'', i.e.~$|\mcal|-w(\mcal^*)\leq\frac{1}{n}$,
and thus $OPT(G)-OPT(G^*)\leq \frac{1}{n}$.
\end{proof}

\medskip

With Lemma~\ref{lem:cardopt} in hand the next theorem follows:
\begin{theorem}
\label{thm:cardappx}
Let $\rho$ be the approximation guarantee of \rgma algorithm on every bush graph. 
Then, for every $\eps<1$, \rgma computes a $(\rho-\eps)$-approximation of the 
maximum cardinality matching for unweighted graphs. 
\end{theorem}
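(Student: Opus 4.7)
The plan is to design a two-phase algorithm: first apply Bush Decomposition to convert the unweighted input graph $G$ into a weighted bush graph $G^*$, and then run \rgma on $G^*$. The matching \mrgma produced by \rgma, viewed simply as a set of edges of $G$, will be our output. The claim to establish is that in expectation the cardinality of \mrgma is at least $(\rho-\eps)$ times the maximum cardinality matching of $G$.

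Three ingredients would combine to give the bound. First, by the hypothesis of the theorem, \rgma has approximation ratio $\rho$ on every bush graph, so $\mathbb{E}[w(\mrgma)] \geq \rho \cdot \opt(G^*)$. Second, Lemma~\ref{lem:cardopt} already provides $\opt(G^*) \geq \opt(G) - \tfrac{1}{n}$. Third, every edge of $G^*$ has weight of the form $1 - k\eps$ with $\eps \ll 1/n^3$, so each edge weight lies in $(0,1]$; consequently the (unweighted) cardinality of \mrgma satisfies $|\mrgma| \geq w(\mrgma)$. Chaining these three inequalities yields
\begin{equation*}
\mathbb{E}[|\mrgma|] \;\geq\; \mathbb{E}[w(\mrgma)] \;\geq\; \rho \cdot \opt(G^*) \;\geq\; \rho \cdot \opt(G) - \frac{\rho}{n}.
\end{equation*}

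To turn this additive loss into the multiplicative $(\rho - \eps)$ bound, note that whenever $\opt(G) \geq 1$ (which excludes only the trivial edgeless case), the inequality $\rho \cdot \opt(G) - \rho/n \geq (\rho - \eps)\opt(G)$ is equivalent to $\eps \cdot \opt(G) \geq \rho/n$, and this is satisfied as soon as $n \geq \rho/\eps$. For the finitely many small instances with $n < \rho/\eps$, the maximum cardinality matching can be computed exactly in polynomial (in fact constant) time, so these are handled trivially.

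The main obstacle I expect is the careful accounting between the additive error introduced by Bush Decomposition (controlled by the parameter $\eps \ll 1/n^3$ used there) and the multiplicative slack $\eps$ demanded by the theorem statement. The clash is resolved by the observation above: once $n$ is at least $\rho/\eps$, the additive $\rho/n$ loss is dominated by $\eps \cdot \opt(G) \geq \eps$. A secondary subtlety worth verifying is that the randomized \rgma on $G^*$ indeed simulates a meaningful greedy process on the unweighted $G$: since the bushes of $G^*$ are precisely the stars chosen by Bush Decomposition in a random order of centers, and within each bush \rgma picks a uniformly random neighbor, the combined algorithm is essentially the randomized greedy process on $G$, which is why the connection to \mrg mentioned earlier in the paper falls out naturally.
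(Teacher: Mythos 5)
Your proposal is correct and follows essentially the same route as the paper: the authors state the theorem as an immediate consequence of Lemma~\ref{lem:cardopt} (run Bush Decomposition, apply \rgma to the resulting bush graph $G^*$, and absorb the additive $\frac{1}{n}$ loss into the multiplicative $\eps$), and your chain of inequalities is precisely the argument they leave implicit. The details you supply --- that edge weights in $G^*$ are at most $1$ so cardinality dominates weight, and the explicit treatment of the finitely many instances with $n<\rho/\eps$ --- are exactly what is needed to make the paper's one-line justification rigorous.
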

We conjecture that a tight bound for $\rho$ is $\frac{2}{3}$; in Section~\ref{two-bush-subsec} we prove our conjecture true for two subclasses of bush graphs. 
Note that, although vertex $u$ in Step~\ref{gws3} of the Bush decomposition is selected at random, 
we do not use anywhere this fact in the proof of Lemma~\ref{lem:cardopt}. 
In particular, both Lemma~\ref{lem:cardopt} and Theorem~\ref{thm:cardappx} hold even when the choice of $u$ in Step~\ref{gws3} is \emph{arbitrary}.

\subsection{Randomized greedy matching in subclasses of bush graphs\label%
{two-bush-subsec}}

In this section we prove that our \textsc{Rgma}\xspace algorithm achieves an
approximation ratio $\rho=\frac{2}{3}$ in two special classes of bush
graphs, cf.~Theorems~\ref{thm:app2bush} and~\ref{thm:appbush}. Before we
prove these two theorems we first need to prove the following three lemmas
which will be useful for our analysis.

\begin{lemma}
\label{claim0-lem} Let $v$ be the center of the bush with the largest weight
in the graph $G+v$. If the edge $(u,v)$ belongs to a maximum greedy matching
of $G+v$ then $\opt(G+v)=w(u,v)+\opt(G-u)$.
\end{lemma}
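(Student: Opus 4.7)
The plan is to exploit the fact that, in the bush graph $G+v$, the heaviest bush is precisely the star centered at $v$, i.e.\ every edge of the maximum weight $w_1$ is incident to $v$. Consequently, any run of the Greedy Matching Procedure on $G+v$ must select an edge incident to $v$ in its very first iteration, and after that selection every remaining edge of weight $w_1$ is removed. This lets me collapse the problem on $G+v$ into the subproblem on $G-u$.

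First I will show the upper bound $\opt(G+v) \leq w(u,v) + \opt(G-u)$. Let $\mcal$ be a maximum weight greedy matching of $G+v$ with $(u,v)\in \mcal$. Since $v\notin G$ and all edges incident to $v$ in $G+v$ have weight $w_1$, removing $\{u,v\}$ together with the whole top bush from $G+v$ yields exactly $G-u$. I will argue that $\mcal \setminus \{(u,v)\}$ is a greedy matching of $G-u$ by simulating the Greedy Matching Procedure on $G+v$ that first picks $(u,v)$: after this pick, the remaining state of the procedure is literally a run of the procedure on $G-u$ producing the edges of $\mcal \setminus \{(u,v)\}$. Hence $w(\mcal) - w(u,v) \leq \opt(G-u)$.

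For the lower bound $\opt(G+v) \geq w(u,v) + \opt(G-u)$, I will take any maximum weight greedy matching $\mcal'$ of $G-u$ and show that $\mcal := \{(u,v)\} \cup \mcal'$ is a greedy matching of $G+v$. I run the Greedy Matching Procedure on $G+v$, choosing $(u,v)$ at Step~4 of the first iteration (this is permitted, since $(u,v)$ has the maximum weight $w_1$). After removing $u$, $v$, and all edges of the top bush, the residual graph is $G-u$, and continuing the procedure can reproduce $\mcal'$. Then $w(\mcal) = w(u,v) + w(\mcal') = w(u,v)+\opt(G-u) \leq \opt(G+v)$.

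Combining the two directions gives the claimed equality. The only delicate point, and the one genuinely using the bush-graph hypothesis, is the identification of $(G+v)$ minus $\{u,v\}$ minus the top bush with $G-u$: this relies on every edge of weight $w_1$ being incident to the center $v$ of the heaviest bush, so that removing $v$ already removes the whole bush. Everything else is a direct translation between runs of the Greedy Matching Procedure on the two graphs.
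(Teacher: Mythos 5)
Your proposal is correct and follows essentially the same two-inequality argument as the paper: peeling off $(u,v)$ from a maximum greedy matching of $G+v$ to get a greedy matching of $G-u$, and conversely prepending $(u,v)$ to a maximum greedy matching of $G-u$, with both directions resting on the fact that all edges of the top weight form a star at $v$. Your extra remark that the residual graph after picking $(u,v)$ is exactly $G-u$ is just a more explicit phrasing of the paper's observation that every edge of $G-u$ has weight less than $w(u,v)$.
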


\begin{proof}
Let $\mathcal{M}$ be a maximum greedy matching of $G+v$ that contains the
edge $(u,v)$. Then $\mathcal{M}\setminus \{(u,v)\}$ is a greedy matching of
the graph $G-u$, and thus its weight is at most $\opt(G-u)$.
That is, $\opt(G+v)-w(u,v)\leq \opt(G-u)$,
and thus $\opt(G+v)\leq w(u,v)+\opt(G-u)$.

Conversely, let now $\mathcal{M}^{\prime }$ be a maximum greedy matching of $%
G-u$. Since $v$ is the center of the bush with the largest weight in $G+v$,
it follows that every edge of $G-u$ has weight less than $w(u,v)$. Therefore 
$\mathcal{M}^{\prime }\cup \{(u,v)\}$ is a greedy matching of $G+v$, and
thus its weight is at most $\opt(G+v)$. That is, $\opt(G-u)+w(u,v)\leq \opt(G+v)$.
\end{proof}

\begin{lemma}
\label{claim1-1-lem}Let $w_{0}$ be the largest edge weight $G$ and let $u$
be a vertex $G$. Then $\opt(G-u)\geq \opt(G)-w_{0}$.
\end{lemma}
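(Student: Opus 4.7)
The plan is to take any maximum greedy matching $\mcal$ of $G$ (so $w(\mcal)=\opt(G)$) and build from it a greedy matching $\mcal'$ of $G-u$ of weight at least $\opt(G)-w_0$. If $u$ is not matched in $\mcal$, then the very sequence of edge choices that produces $\mcal$ on $G$ remains feasible on $G-u$: no selected edge touches $u$, and maximality at each weight level is preserved because $G-u$ contains only a subset of the weight-$w_i$ edges of $G$ for every $w_i$. Hence $\mcal$ is itself a greedy matching of $G-u$, and $\opt(G-u)\geq w(\mcal)=\opt(G)\geq\opt(G)-w_0$, as desired.

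Otherwise $u$ is matched in $\mcal$ by an edge $e_u=(u,v)$ of weight $w_u\leq w_0$, and $\mcal\setminus\{e_u\}$ is a matching in $G-u$ of weight $\opt(G)-w_u$. I would build $\mcal'$ by running the greedy matching procedure on $G-u$ with the following biased tie-breaking rule: at each weight level (processed in decreasing order of weight), first include every edge of $\mcal\setminus\{e_u\}$ of that weight whose endpoints are still unmatched in $\mcal'$, and then extend to any maximal matching of that weight. The resulting $\mcal'$ is a greedy matching of $G-u$ by construction, so the remaining step is to prove the inequality $w(\mcal')\geq w(\mcal\setminus\{e_u\})=\opt(G)-w_u\geq \opt(G)-w_0$.

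For this weight comparison, I would analyze the symmetric difference of $\mcal\setminus\{e_u\}$ and $\mcal'$, split into ``extras'' (edges of $\mcal'\setminus\mcal$) and ``losses'' (edges of $(\mcal\setminus\{e_u\})\setminus\mcal'$). The key structural observation is that the only vertex whose greedy availability changes between the run of $\mcal$ on $G$ and the run of $\mcal'$ on $G-u$ is $v$ (freed by removing $e_u$), and each loss frees exactly one further vertex that can seed at most one new extra; hence the symmetric difference forms a single non-branching alternating path beginning at $v$, in which extras and losses alternate. The crucial invariant is that each extra sits at a strictly higher weight level than the loss it triggers, because a loss occurs only when an already-processed, strictly heavier extra has matched one endpoint of an $\mcal$-edge. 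Pairing each loss with its triggering extra then yields $w(\text{extras})\geq w(\text{losses})$, so $w(\mcal')\geq w(\mcal\setminus\{e_u\})$ and the lemma follows from $\opt(G-u)\geq w(\mcal')$. The main obstacle is rigorously justifying this non-branching chain structure, which I plan to establish by induction on the weight levels: the induction hypothesis is that after processing all levels of weight strictly greater than $w_i$, at most one vertex (the current tail of the chain) is newly free in $\mcal'$ compared to $\mcal\setminus\{e_u\}$, whence at most one new extra can be introduced when extending to maximality at level $w_i$, maintaining the invariant.
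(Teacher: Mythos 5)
Your proof is correct in outline and reaches the same destination as the paper's, but it gets there by a mechanically different route, so a comparison is worthwhile. The paper starts from $\mathcal{M}_{0}=\mathcal{M}\setminus\{(u,v)\}$ and \emph{repairs} it locally: the only possible greediness violations are at the freed vertex $v$, and resolving the heaviest one either adds an edge at an unmatched neighbour (weight goes up, process stops) or swaps a lighter matched edge $(v_{1},v_{2})$ for the heavier $(v,v_{1})$ (weight strictly goes up, and the single newly freed vertex $v_{2}$ becomes the new head of the path). The weight accounting is therefore immediate and stepwise, and the alternating path from $v$ is \emph{constructed} one edge at a time rather than extracted from a symmetric difference. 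Your version instead re-runs the greedy procedure on $G-u$ with $\mathcal{M}$-preferring tie-breaking and then compares outputs globally; this buys you that $\mathcal{M}'$ is greedy \emph{by construction} (the paper, conversely, has to argue that its repaired matching is greedy and that the repair terminates, which it does only implicitly via the strict weight increase), but it costs you the non-branching-chain invariant, which is exactly the part your sketch leaves heaviest. That invariant does hold, and your induction goes through, but to close it you need two facts you currently leave implicit: (i) by maximality of the greedy matching $\mathcal{M}$ at each weight level, any ``extra'' edge added in your extension phase must have an endpoint in the discrepancy set (matched in $\mathcal{M}$ at the current or a higher level but free in $\mathcal{M}'$), which is what forces extras onto the chain; and (ii) within a single weight level you must order the two phases carefully so that a lost $\mathcal{M}$-edge of weight $w_{i}$ can only be blocked by an extra of weight \emph{strictly} greater than $w_{i}$, which is what makes your loss-to-extra pairing weight-decreasing. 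With those two points spelled out, your argument is a complete and valid alternative proof.
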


\begin{proof}
Let $\mathcal{M}$ be a maximum greedy matching of $G$. If $u$ is not matched
in $\mathcal{M}$ then $\opt(G-u)=\opt(G)$, which satisfies
the statement of the lemma. Suppose now that $u$ is matched in $\mathcal{M}$
and let $(u,v)\in \mathcal{M}$. We will modify the matching $\mathcal{M}$ of 
$G$ to a matching $\mathcal{M}^{\prime }$ of $G-u$ as follows. First remove
the edge $(u,v)$ from $\mathcal{M}$ and let $\mathcal{M}_{0}=\mathcal{M}%
\setminus \{(u,v)\}$. If $\mathcal{M}_{0}$ is a greedy matching of $G-u$
then define $\mathcal{M}^{\prime }=\mathcal{M}_{0}$; note that in this case $%
w(\mathcal{M}^{\prime })=w(\mathcal{M})-w(u,v)\geq w(\mathcal{M})-w_{0}$.
Otherwise, if $\mathcal{M}_{0}$ is not greedy, $v$ must have either (a) a
neighbor $v^{\prime }$ such that $v^{\prime }$ is unmatched in $\mathcal{M}%
_{0}$ or (b) a neighbor $v_{1}$ that is matched in $\mathcal{M}_{0}$ with an
edge $(v_{1},v_{2})$, where $w(v_{1},v_{2})<w(v,v_{1})$. If $v$ has both
such neighbors $v^{\prime }$ and $v_{1}$, we choose to consider only case
(a) if $w(v,v^{\prime })>w(v,v_{1})$, or only case (b) if $%
w(v,v_{1})>w(v,v^{\prime })$, breaking ties arbitrarily if $w(v,v^{\prime
})=w(v,v_{1})$. In case (a) we define $\mathcal{M}^{\prime }=\mathcal{M}%
_{0}\cup \{(v,v^{\prime })\}$; then $\mathcal{M}^{\prime }$ is greedy and 
\begin{eqnarray*}
w(\mathcal{M}^{\prime }) &=&w(\mathcal{M})-w(u,v)+w(v,v^{\prime }) \\
&\geq &w(\mathcal{M})-w(u,v) \\
&\geq &w(\mathcal{M})-w_{0}.
\end{eqnarray*}

In case (b) we define $\mathcal{M}^{\ast }=\mathcal{M}_{0}\cup
\{(v,v_{1})\}\setminus \{(v_{1},v_{2})\}$. In this case 
\begin{eqnarray*}
w(\mathcal{M}^{\ast }) &=&w(\mathcal{M})-w(u,v)+\left(
w(v,v_{1})-w(v_{1},v_{2})\right)  \\
&>&w(\mathcal{M})-w(u,v) \\
&\geq &w(\mathcal{M})-w_{0}.
\end{eqnarray*}%
If $\mathcal{M}^{\ast }$ is greedy then we define $\mathcal{M}^{\prime }=%
\mathcal{M}^{\ast }$. Otherwise, if $\mathcal{M}^{\ast }$ is not greedy, $%
v_{2}$ must have (similarly to the above) either (a) a neighbor $%
v_{2}^{\prime }$ such that $v_{2}^{\prime }$ is unmatched in $\mathcal{M}^{\ast }
$ or (b) a neighbor $v_{3}$ that is matched in $\mathcal{M}^{\ast }$ with an
edge $(v_{3},v_{4})$, where $w(v_{3},v_{4})<w(v_{2},v_{3})$. We continue to
update the matching $\mathcal{M}^{\ast }$ as above, until we reach a
matching $\mathcal{M}^{\prime }$ of $G-u$ such that $\mathcal{M}^{\prime }$
is greedy and $w(\mathcal{M}^{\prime })\geq w(\mathcal{M})-w_{0}$. 
This completes the proof of the lemma, since $w(\mathcal{M}%
^{\prime })\leq \opt(G-u)$ and $w(\mathcal{M})=\opt(G)$.
\end{proof}

\medskip

In the next theorem we prove that \rgma achieves an approximation ratio of $\frac{2}{3}$ when 
applied to a bush graph with only two different edge weights. 
Using this theorem as the induction basis, we then prove in Theorem~\ref{thm:app2bush} that 
\rgma achieves an approximation ratio of $\frac{2}{3}$ also when applied to a bush graph in which every bush has at most two edges.

\begin{theorem}
\label{thm:app2bush}
\rgma is a $\frac{2}{3}$-approximation when applied on bush graphs with only two
weights and with an arbitrary number of edges per bush.
\end{theorem}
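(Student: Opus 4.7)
The plan is to reduce the analysis to a small combinatorial inequality by exploiting the restricted structure of greedy matchings in a two-bush graph. Let $B_1$ and $B_2$ be the two bushes with weights $w_1 > w_2$, centers $c_1$ and $c_2$, and leaf sets $L_1$ and $L_2$, respectively (the two stars may share vertices). Since the first edge selected by any greedy procedure is of the form $(c_1,u)$ with $u \in L_1$, and this removes $c_1$ together with every remaining edge of $B_1$, every greedy matching --- in particular \rgma and an optimal one --- consists of exactly one edge of $B_1$ followed by at most one edge of $B_2$. Hence $\opt(G) \in \{w_1,\, w_1+w_2\}$, and in the case $\opt(G)=w_1$ we trivially have $\mathbb{E}[w(\mrgma)] \geq w_1 = \opt(G)$. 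I therefore focus on $\opt(G)=w_1+w_2$ and write $\mathbb{E}[w(\mrgma)] = w_1 + p\, w_2$, where $p$ is the probability that, after \rgma picks a uniformly random leaf $u\in L_1$, the bush $B_2$ still has an edge available.

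The core of the argument is a structural bound on the set $B \subseteq L_1$ of \emph{bad} leaves, where $u$ is bad iff removing $\{c_1,u\}$ destroys all edges of $B_2$. Writing $L_2' := L_2\setminus\{c_1\}$, a direct case analysis shows that $u$ is bad iff either $u = c_2$ (which is only possible when $c_2 \in L_1$) or $L_2' \subseteq \{u\}$. Since $\opt(G)=w_1+w_2$ excludes the degenerate case $L_2'=\emptyset$ (otherwise no choice of $B_1$ edge would leave a $B_2$ edge available), the second condition contributes at most one bad leaf: namely the unique element of $L_2'$ when $|L_2'|=1$ and that element lies in $L_1$. Because $c_2$ is never a leaf of its own star, the two candidate bad leaves are distinct, so $|B|\leq 2$.

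Let $a = |L_1|$. The existence of at least one good leaf (forced by $\opt(G)=w_1+w_2$) gives $|B|\leq a-1$, and combining this with $|B|\leq 2$ yields $p = (a-|B|)/a \geq 1/3$ for every $a\geq 1$: the bound $|B|\leq a-1$ handles $a\in\{1,2\}$, and $|B|\leq 2$ handles $a\geq 3$. The desired inequality $w_1 + p w_2 \geq \tfrac{2}{3}(w_1+w_2)$ is equivalent to $w_1 \geq (2-3p)\, w_2$, and since $p \geq 1/3$ gives $2-3p \leq 1$, it follows from the assumption $w_1 > w_2$. The worst case, realized in the limit $a=3$, $|B|=2$, $w_1 \to w_2^{+}$, shows that the constant $\tfrac{2}{3}$ is tight for this subclass. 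The main obstacle is the structural bound $|B|\leq 2$, as it must handle uniformly the three overlap patterns of the two bushes (whether $(c_1,c_2)$ is an edge of $B_1$, of $B_2$, or of neither) and rule out the degenerate configuration $L_2 \subseteq \{c_1\}$ via the $\opt(G)=w_1+w_2$ hypothesis; once this is in place, the remaining probabilistic and arithmetic steps are routine.
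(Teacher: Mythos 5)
Your proof is correct, and it takes a genuinely different route from the paper's. The paper proves this theorem by an explicit three-way case analysis on the position of the second bush's center relative to the first bush (coinciding with $c_1$, disjoint from $g_1$, or a leaf of $g_1$), further partitioning the edges of $g_2$ according to how they meet $g_1$, and computing the expected matching weight separately in each sub-case; the worst ratio $\tfrac{2}{3}$ emerges in the final sub-case where $g_2$ has a unique edge joining two leaves of $g_1$. You instead isolate the one quantity that matters --- the probability $p$ that a $B_2$ edge survives the random first pick --- and prove a uniform structural bound $|B|\le 2$ on the set of ``bad'' leaves of $B_1$, valid across all overlap patterns simultaneously, from which $p\ge \tfrac13$ and the bound $w_1+pw_2\ge\tfrac23(w_1+w_2)$ (using $w_1>w_2$) follow in two lines. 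Your version is more compact, avoids the small arithmetic slips present in the paper's case computations, makes the tight configuration ($|L_1|=3$, two bad leaves, $w_1\to w_2$) transparent, and matches the paper's extremal example. One cosmetic point: your ``$u$ is bad iff $u=c_2$ or $L_2'\subseteq\{u\}$'' characterization implicitly assumes $c_1\ne c_2$; this is legitimate since $c_1=c_2$ forces $\opt(G)=w_1$, which you dispatch beforehand, but it is worth saying explicitly that the focus on $\opt(G)=w_1+w_2$ rules out $c_1=c_2$ as well as $L_2'=\emptyset$.
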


\begin{proof}
Let $g_1$ and $g_2$ be the two bushes and let $x_2$ be the center of the bush 
$g_2$. We have to consider the following three cases: $x_2$ is the center of the 
bush $g_1$ too, $x_2$ does not belong to the bush $g_1$, $x_2$ is a leaf of the 
bush $g_1$. 
If $x_2$ is the center of the bush $g_1$ too, then any greedy matching is 
consisted by only one edge from $g_1$, thus \rgma always finds an optimal greedy
matching. 
When $x_2$ does not belong to the bush $g_1$ we can partition the edges of $g_2$
in three sets: 
\begin{itemize}
\item $g_{21}$: edges of $g_2$ that are incident to the center of the bush $g_1$.
\item $g_{22}$: edges of $g_2$ that are not incident to leaves of $g_1$.
\item $g_{23}$: edges of $g_2$ that are incident to a leaf of $g_1$.
\end{itemize}
Without loss of generality we may assume that $g_{21} = \emptyset$, since none 
of these edges can be chosen in any greedy matching. Furthermore, if 
$g_{22} \neq \emptyset$, then \rgma will always choose one edge from the $g_2$,
thus it will return an optimal greedy matching. Finally, if $|g_{23}|>1$, then 
for every choice of edge from $g_1$, there will be at least one edge in $g_2$ 
that can be chosen in the greedy matching. Hence \rgma will construct an optimal
greedy matching. Finally suppose that $e' \in g_{23}$ be the unique edge $g_{23}$
has. Then the probability that $e'$ is deleted in the first iteration of \rgma
is $\frac{1}{|g_1|}$, thus \rgma will return a matching with expected weight 
$w_1 + \frac{|g_1| - 1}{|g_1|}w_2 \geq w_1 + \frac{1}{2}w_2$. Hence the 
approximation guarantee of \rgma in this case is 
$\frac{w_1 + w_2/2}{w_1 + w_2} > \frac{3}{4}$. 

Finally, we consider the case where $x_2$ is a leaf of the bush $g_1$. Then we 
may partition the edges of $g_2$ into the following two sets:
\begin{itemize}
\item $g_{24}$: edges of $g_2$ that are adjacent only to one leaf of $g_1$.
\item $g_{25}$: edges of $g_2$ that are adjacent to two leaves of $g_1$.
\end{itemize}
If $g_{24} \neq \emptyset$, then with probability $\frac{|g_1|-1}{|g_1|}$ the 
algorithm returns a matching with weight $w_1 + w_2$. Thus, if this is the case
the approximation guarantee of the algorithm is at least $\frac{3}{4}$. Hence,
without loss of generality we may assume that $g_{24} = \emptyset$.  If 
$|g_{25}| > 1$, then the only case the algorithm \rgma returns a matching with
weight $w_1$ is when the edge $(x_1, x_2)$ is chosen; in every other case there
exists at least one in $g_2$ that it is not deleted. This happens with 
probability $\frac{1}{|g_1|}$. Notice that by our assumption it must be true 
that $|g_1| \geq |g_2| > 1$. Thus, the expected weight of the matching returned 
by the algorithm is $\frac{|g_1|-1}{|g_1|}(w_1 + w_2) + w_1/|g_1| \geq 
\frac{1}{2}(w_1 + w_2) + w_1/2$. Hence the approximation guarantee of the \rgma 
is better than $\frac{2}{3}$. The remaining case is when there is a unique edge 
$(x_2, v) \in g_{25}$. Then the \rgma returns a matching of weight $w_1$ when at 
least one of the edges $x_2$ and $v$ is chosen in the first iteration. This 
happens with probability $\frac{2}{|g_1|}$. Hence the expected weight of the 
matching returned by the algorithm is $\frac{|g_1|-2}{|g_1|}(w_1 + w_2) + 
2w_1/|g_1| \geq \frac{1}{3}(w_1 + w_2) + w_1$. Hence, in this case the 
approximation guarantee of the algorithm is at least $\frac{2}{3}$.
\end{proof}

\medskip

We are now ready to prove the main theorem of this section.
\begin{theorem}
\label{thm:appbush} \textsc{Rgma}\xspace is a $\frac{2}{3}$-approximation
when applied on bush graphs where each bush has at most two edges.
\end{theorem}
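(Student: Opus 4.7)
The plan is to proceed by induction on the number of bushes (distinct edge weights) in $G$. The base case is the empty graph, where $\opt(G)=0=E[\rgma(G)]$ trivially.

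For the inductive step, let $g_1$ denote the heaviest bush, with weight $w_1$ and center $v$; by hypothesis $|g_1|\in\{1,2\}$. If $|g_1|=1$, say $g_1=\{(v,u)\}$, then \rgma deterministically picks $(v,u)$ and recurses on $G'=G-\{v,u\}$, which is again a bush graph with at most two edges per bush (since vertex removal can only shrink stars). Lemma~\ref{claim0-lem} gives $\opt(G)=w_1+\opt(G')$, and combining with the inductive hypothesis
\[
E[\rgma(G)]=w_1+E[\rgma(G')]\geq w_1+\tfrac{2}{3}\opt(G')\geq \tfrac{2}{3}\opt(G),
\]
since $w_1\geq\tfrac{2}{3}w_1$.

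If $|g_1|=2$, say $g_1=\{(v,u_1),(v,u_2)\}$, then \rgma selects each edge with probability $\tfrac{1}{2}$ and recurses on $G_i=G-\{v,u_i\}$, both of which remain bush graphs with at most two edges per bush. Let $O_i=\opt(G_i)$ and assume without loss of generality $O_1\geq O_2$. Since every maximum greedy matching of $G$ must begin with one of the two edges of $g_1$, Lemma~\ref{claim0-lem} yields $\opt(G)=w_1+\max(O_1,O_2)=w_1+O_1$; meanwhile, the inductive hypothesis gives
\[
E[\rgma(G)]\geq w_1+\tfrac{1}{3}(O_1+O_2).
\]
A short calculation shows that $E[\rgma(G)]\geq\tfrac{2}{3}\opt(G)$ is equivalent to the single inequality $O_1-O_2\leq w_1$.

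The main obstacle will be to establish this last inequality, which I plan to handle via two applications of Lemma~\ref{claim1-1-lem}. Setting $H=G-v$, the first application to $G$ with vertex $v$ (whose largest incident weight is $w_1$) gives $\opt(H)\geq\opt(G)-w_1=O_1$. For the second application, I note that every $w_1$-weight edge of $G$ is incident to $v$, so the largest edge weight $w_0^H$ in $H$ satisfies $w_0^H\leq w_2<w_1$; applying the lemma to $H$ with vertex $u_2$ then yields $O_2=\opt(H-u_2)\geq\opt(H)-w_0^H\geq O_1-w_0^H$, hence $O_1-O_2\leq w_0^H<w_1$ as required. The only auxiliary bookkeeping is to check that every residual graph inherits the ``at most two edges per bush'' property and that the number of bushes strictly decreases at each recursion, both of which are immediate from the fact that \rgma fully consumes the heaviest bush at each step.
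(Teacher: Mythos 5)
Your proof is correct and follows essentially the same route as the paper's: induction on the number of bushes, the recursion $\av(G)=w_1+\tfrac12(\av(G_1)+\av(G_2))$, and the key inequality $O_1-O_2\leq w_1$ obtained from two applications of Lemma~\ref{claim1-1-lem} (via the intermediate graph $G-v$), exactly as in the paper's chain $\opt(G-\beta)+2w_0\geq\opt(G)+w_0\geq\opt(G+x_0)$. The only (harmless) difference is that you take the empty graph as the base case and treat single-edge bushes explicitly, whereas the paper anchors the induction at Theorem~\ref{thm:app2bush}.
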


\begin{proof}
We will prove the claim by induction on the number of the different weight
values $w_{1},\ldots ,w_{\lambda }$ the bush graph $G$ has. We will use $%
\av\xspace(G)$ to denote the expected weight of the greedy matching
produced by \textsc{Rgma}\xspace on the bush graph $G$. We know from Theorem~%
\ref{thm:app2bush} that the claim holds when there are only two weight
values in $G$. Assume that for any bush graph $G$ with $i\geq 2$ different
weight values such that every bush of $G$ has at most two edges, it holds
that $\av\xspace(G)\geq \frac{2}{3}\opt(G)$. We
will prove that the claim holds also for bush graphs with $i+1$ different
weight values.

Let $x_{0}$ be the center of the bush with the largest weight $w_{0}$ and
let $\alpha $ and $\beta $ be the leaves of this bush. Without loss of
generality we can assume that all incident edges to $x_{0}$ have weight $%
w_{0}$, since every other incident edge of $x_{0}$ with weight $w_{i}<w_{0}$
would never be selected by the greedy algorithm. Assume that the edge $%
(x_{0},\alpha )$ belongs to the optimal greedy matching of $G+x_{0}$. Thus, $%
\opt(G+x_{0})=w_{0}+\opt(G-\alpha )$ by
Lemma \ref{claim0-lem}. Furthermore $\opt(G-\beta )\geq 
\opt(G)-w_{0}$ by Lemma \ref{claim1-1-lem}. Hence we get 
\begin{align*}
\av\xspace(G+x_{0})& =w_{0}+\frac{1}{2}(\av\xspace%
(G-\alpha )+\av\xspace(G-\beta )) \\
& \geq w_{0}+\frac{1}{2}\cdot \frac{2}{3}(\opt(G-\alpha )+%
\opt(G-\beta )) \\
& =\frac{1}{3}(\opt(G+x_{0})+\opt(G-\beta
)+2w_{0}).
\end{align*}%
Now note that $\opt(G-\beta )\geq \opt(G)-w_{0}$
and $\opt(G)\geq \opt(G+x_{0})-w_{0}$ by Lemma~\ref{claim1-1-lem}. 
Therefore $\opt(G-\beta)+2w_{0}\geq \opt(G)+w_{0}\geq \opt(G+x_{0})$, and thus
\begin{equation*}
\av\xspace(G+x_{0})\geq \frac{1}{3}(\opt(G+x_{0})+%
\opt(G+x_{0}))=\frac{2}{3}\opt(G+x_{0}).
\end{equation*}
\end{proof}

\section{Conclusions}

Several interesting open questions stem from our paper. 
Probably the most important one is to derive tight approximation guarantees 
$\rho$ for the maximum weight greedy matching problem, even for bush graphs. 
We conjecture that $\rho=\frac{2}{3}$; an affirmative answer to our conjecture 
would imply that the algorithm \mrg for maximum cardinality matching in 
unweighted graphs has an approximation ratio of almost $\frac{2}{3}$, thus 
solving a longstanding open problem~\cite{PS12,ADFS95,DF91}. We believe that our
approach might provide novel ways of better analysis of the \mrg algorithm.
As we proved, \greedy is NP-hard even on graphs of maximum degree three with at 
most three different weight values on their edges. It remains open whether \greedy can be solved 
efficiently when there are only two weight values on the edges of the input graph.

{\small 

}

\end{document}